\pgfplotsset{compat=1.12}
\pgfplotsset{every tick label/.append style={font=\footnotesize}}
\lstdefinelanguage{dsl}{
    morekeywords={language,feature,using,semantics,learners,int,string,@input,@start,values,let,in,@ref,@values,Tuple,@extern,@output,namespace,bool},
    otherkeywords={:=,|,;,=>,:,[,]},
    sensitive=true,
    morecomment=[l]{//},
    morestring=[b]',
}
    \crefname{section}{\S\@gobble}{\S\@gobble}
    \crefname{defn}{definition}{definitions}
    \pretocmd{\NAT@citexnum}{\@ifnum{\NAT@ctype>\z@}{\let\NAT@hyper@\relax}{}}{}{}
    \DeclareRobustCommand\em
    \newenvironment{xflalign*}
        {\setlength{\abovedisplayskip}{0pt}\setlength{\belowdisplayskip}{0pt}%
         \csname flalign*\endcsname}
        {\csname endflalign*\endcsname\ignorespacesafterend}
\definecolor{linenum-gray}{HTML}{888888}
\newcommand{\NoNumber}{\def\alglinenumber##1{}}
\newcommand{\WithNumber}{\def\alglinenumber##1{\sf\scriptsize\color{linenum-gray}##1:\hspace{-11pt}}}
\newcommand{\Functionx}[2]{\NoNumber \Function{#1}{#2} \addtocounter{ALG@line}{-1} \WithNumber}
\let\oldStatex\Statex
\renewcommand{\Statex}{\oldStatex \hspace{\algorithmicindent}}
\DeclareMathOperator*{\argmax}{\arg\!\max}
\DeclareMathOperator*{\bigvsaunion}{\adjustbox{scale=1.6,valign=t}{$\pmb{\mathsf{U}}$}}
\newtheorem{theorem}{Theorem}
\newtheorem{defn}{Definition}
\theoremstyle{definition}
\newtheorem{example}{Example}
\newtheorem{problem}{Problem}
\begin{document}
\newcommand{\ignore}[1]{}
\newcommand{\discuss}[1]{{\color{red}#1}}

\newcommand{\projectName}{PROSE\xspace}
\newcommand{\topDownName}{\ensuremath{D^{4}}\xspace}
\newcommand{\fe}{FlashExtract\xspace}
\newcommand{\ff}{FlashFill\xspace}
\newcommand{\fs}{FlashSplit\xspace}
\newcommand{\hypothesizer}{hypothesizer\xspace}
\newcommand{\decomposable}{definitive\xspace}
\newcommand{\localRefine}{locally refining\xspace}
\newcommand{\globalRefine}{globally refining\xspace}

\newcommand{\fetests}{100\xspace}

\newcommand{\dsl}{\mathcal{L}}
\newcommand{\state}{\sigma}
\newcommand{\inputSymbol}[1]{\mathsf{input}(#1)}
\newcommand{\start}[1]{\mathsf{output}(#1)}
\newcommand{\rhs}[1]{\mathsf{RHS}(#1)}
\newcommand{\freeVariables}[1]{\mathsf{FV}(#1)}
\newcommand{\spec}{\varphi}
\newcommand{\constraint}{\psi}
\newcommand{\synalgorithm}{\mathcal{A}}
\newcommand{\rank}{h}
\newcommand{\join}{\Bowtie}
\newcommand{\volume}[1]{V(#1)}
\newcommand{\width}[1]{W(#1)}
\newcommand{\vsa}[1][N]{\widetilde{#1}}
\newcommand{\values}[1][v]{\pvec{#1}}
\newcommand{\states}[1][\state]{\pvec{#1}}
\newcommand{\vsajoin}[1][F]{#1_{\text{\tiny\join}}}
\newcommand{\wf}[1][]{\,\mathrel{\pmb{\Longrightarrow}_{\!\!#1}}\,}
\newcommand{\wfc}[1][]{\,\mathrel{\pmb{\Longleftrightarrow}_{\!\!#1}}\,}
\newcommand{\dep}[1][]{\mathrel{\pmb{\models}}}
\newcommand{\result}{\semantics{\cdot}}
\newcommand{\atsign}{\makeatletter @\makeatother}
\newcommand{\inv}{{\scriptscriptstyle-\!1}}
\newcommand{\tospec}{\rightsquigarrow}
\newcommand{\Bool}{\mathsf{Bool}}
\newcommand{\Reals}{\mathbb{R}}
\newcommand{\true}{\mathsf{TRUE}}
\newcommand{\false}{\mathsf{FALSE}}
\newcommand{\field}{e}
\newcommandtwoopt{\disambScore}[2][q][\vsa]{\mathsf{ds}(#1 , #2)}

\newcommand{\vsaunion}{\pmb{\mathsf{U}}}
\newcommand{\vsaunionop}{\mathbin{\vsaunion}}
\newcommand{\semantics}[1]{\llbracket #1 \rrbracket}
\newcommand{\is}{\;:=\;}
\newcommand{\palt}{\;|\;}
\newcommand{\bydef}{\mathrel{\overset{\mathsf{\scriptscriptstyle def}}{=}}}
\newcommand{\todoinline}[1]{\vspace*{\topsep}\todo[inline, caption={\textbf{\textsf{TODO}}}]{#1}}
\newcommand{\pvec}[1]{\vec{#1}\mkern2mu\vphantom{#1}}
\newcommand{\bigmid}{\mathrel{\big|}}
\newcommand{\feseqdsl}{\dsl_{\mathsf{ES}}}
\newcommand{\fedsl}{\dsl_{\mathsf{FE}}}
\newcommand{\fsdsl}{\dsl_{\mathsf{FS}}}
\newcommand{\ffdsl}{\dsl_{\mathsf{FF}}}
\newcommand{\yesmark}{\ding{51}}
\newcommand{\nomark}{\ding{55}}
\newcommand{\hlc}[2][box-blue]{{\sethlcolor{#1} \hl{#2}}}
\newcommand{\stringliteral}[1]{\ensuremath{\text{\begin{small}``\texttt{#1}''\end{small}}}}
\newcommand{\assuming}{\;\ifnum\currentgrouptype=16 \middle\fi\vert\;}

\newcommandtwoopt{\projection}[2][\vsa][\spec]{{#1\hspace{-0.5pt}\stretchrel*{\upharpoonright}{#1_.}}_{\raisebox{1pt}{\smaller $#2$}}}
\newcommandtwoopt{\clustering}[2][\vsa][\state]{#1 /_{#2}}

\robustify{\stringliteral}

\pgfplotstableset{
    /color cells/min/.initial=0,
    /color cells/max/.initial=1000,
    /color cells/textcolor/.initial=,
    %
    color cells/.code={%
        \pgfqkeys{/color cells}{#1}%
        \pgfkeysalso{%
            postproc cell content/.code={%
                \begingroup
                %
                \pgfkeysgetvalue{/pgfplots/table/@preprocessed cell content}\value
                \ifx\value\empty
                    \endgroup
                \else
                \pgfmathfloatparsenumber{\value}%
                \pgfmathfloattofixed{\pgfmathresult}%
                \let\value=\pgfmathresult
                %
                \pgfplotscolormapaccess
                    [\pgfkeysvalueof{/color cells/min}:\pgfkeysvalueof{/color cells/max}]
                    {\value}
                    {\pgfkeysvalueof{/pgfplots/colormap name}}%
                %
                \pgfkeysgetvalue{/pgfplots/table/@cell content}\typesetvalue
                \pgfkeysgetvalue{/color cells/textcolor}\textcolorvalue
                %
                \toks0=\expandafter{\typesetvalue}%
                \xdef\temp{%
                    \noexpand\pgfkeysalso{%
                        @cell content={%
                            \noexpand\cellcolor[rgb]{\pgfmathresult}%
                            \noexpand\definecolor{mapped color}{rgb}{\pgfmathresult}%
                            \ifx\textcolorvalue\empty
                            \else
                                \noexpand\color{\textcolorvalue}%
                            \fi
                            \the\toks0 %
                        }%
                    }%
                }%
                \endgroup
                \temp
                \fi
            }%
        }%
    }
}

\conferenceinfo{POPL '17}{}
\copyrightyear{2017}

\title{Interactive Program Synthesis}
\authorinfo{Vu Le}{Microsoft Corporation}{levu@microsoft.com}
\authorinfo{Daniel Perelman}{Microsoft Corporation}{danpere@microsoft.com}
\authorinfo{Oleksandr Polozov}{University of Washington}{polozov@cs.washington.edu}
\authorinfo{Mohammad Raza}{Microsoft Corporation}{moraza@microsoft.com}
\authorinfo{Abhishek Udupa}{Microsoft Corporation}{abudup@microsoft.com}
\authorinfo{Sumit Gulwani}{Microsoft Corporation}{sumitg@microsoft.com}
\maketitle

\begin{abstract}
    Program synthesis from incomplete specifications (e.g. input-output
    examples) has gained popularity and found real-world applications,
    primarily due to its ease-of-use.
    Since this technology is often used in an interactive
    setting, efficiency and correctness are often the key user expectations
    from a system based on such technologies.
    Ensuring efficiency is challenging, since the highly combinatorial nature of program synthesis
    algorithms does not fit in a 1--2 second response expectation of a user-facing system.
    Meeting correctness expectations is also difficult, given
    that the specifications provided are incomplete, and that the users of such
    systems are typically non-programmers.

    In this paper, we describe how \emph{interactivity} can be leveraged
    to develop efficient synthesis algorithms, as well as to decrease
    the cognitive burden that a user endures trying to ensure that the
    system produces the desired program.
    We build a formal model of user interaction along three dimensions:
    \emph{incremental algorithm}, \emph{step-based problem formulation}, and
    \emph{feedback-based intent refinement}.
    We then illustrate the effectiveness of each of these forms of
    interactivity with respect to synthesis performance and correctness on a set of real-world case studies.
\end{abstract}

\section{Introduction}
\label{sec:introduction}
Program synthesis is the task of generating a program in an underlying
\emph{domain-specific language} (DSL) from an intent specification
provided by a user~\cite{gulwani2010dimensions}.
When the user in question is a non-programmer,
the specification method must be concise and easy to provide without
any programming expertise.
The \emph{programming by examples} (PBE) paradigm,\footnote{In this
  article, whenever applicable, we use PBE to more generally denote
  programming using under-specifications.} where the user intent is
specified by the means of input-output examples or constraints,
satisfies this requirement ideally.
Although examples are succinct and easy for users to provide, they form an
\emph{under-specification} on the behavior of the desired program,
adding inherent ambiguity into the problem definition.

Thanks to its ease of use, PBE has been effectively applied to many real-world scenarios in mass-market deployments.
Two prominent examples are \ff~\cite{flashfill} and \fe~\cite{flashextract}.
\ff is a technology for automating repetitive string transformations, released as a feature in Microsoft Excel 2013.
\fe is a technology for extracting hierarchical data from semi-structured text files, released for log analytics in Microsoft Operations
Management Suite and as the \textsf{ConvertFrom-String} cmdlet in Windows PowerShell.

The large-scale and continued adoption of PBE techniques is contingent
on ensuring \textbf{(a)} the performance of the synthesizer, and
\textbf{(b)} the correctness of the synthesized program.
Responsiveness is critical to making PBE technologies usable.
Users are willing to interact with the system in many rounds providing constraints iteratively, but any wait time exceeding 1--2 seconds per
round leads to a frustrating experience.
Deployed systems like \ff and \fe ensure performance by restricting
the expressiveness of the underlying DSL or by
bounding the execution time of the synthesizer.
Such restrictions limit the applicability and growth of these
technologies:
when the underlying DSL is enriched to meet the users' demands for
capturing a larger class of tasks, the performance of the synthesizer starts
degrading.

The correctness of the synthesized program is critical to building trust in PBE systems.
In the past, intent ambiguity in PBE has been primarily handled by imposing a sophisticated ranking on the DSL~\cite{cav:ranking}.
While ranking goes a long way in avoiding undesirable interpretations
of the user's intent, it is not a complete solution.
For example, \ff in Excel is designed to cater to users that care not about the program but about its behavior on the small number of input
rows in the spreadsheet.
Such users can simply eye-ball the outputs of the synthesized program and provide another example if they are incorrect.
However, this becomes much more cumbersome (or impossible) with a larger spreadsheet.%
\footnote{John Walkenbach, famous for his Excel textbooks, labeled \ff as a ``controversial'' feature.
He wrote: \textsf{``It's a great concept, but it can also lead to lots of bad data. I think many users will look at a few ``flash filled''
cells, and just assume that it worked. But my preliminary tests leads me to this conclusion: Be very
careful.''}~\cite{walkenbach:controversial}}

We have observed that inspecting the synthesized program directly also does not establish enough confidence in it even if the user knows
programming.
Two main reasons for this are (i) program readability,%
\footnote{Stephen Owen, a certified MVP (``Most Valued Professional'')
  in Microsoft technologies, said the following of a program
  synthesized by \fe:
\textsf{``If you can understand this, you’re a better person than I am.''}~\cite{mvp:complaint}} and (ii) the users' uncertainty in the
desired intent due to hypothetical unseen corner cases in the data.
This feedback was consistent with a recent user study~\cite{flashprog}, which illustrated that users find it less useful and approachable to
inspect the synthesized programs but would prefer more interactive models to converge to the desired program and to be confident of its
correctness.

Due to ambiguity of intent in PBE, the standard user interaction model in this setting is for the user to provide constraints
\emph{iteratively} until the user is satisfied with the synthesized program or its behavior on the known inputs.
However, most work in this area, including \ff and \fe, has not been formally modeled as an iterative process.
In this paper, we propose an interactive formulation of program synthesis that leverages the inherent iterative nature of
synthesis from under-specifications.

\subsection*{Interactivity}
We present \emph{interactivity} as the solution to addressing the performance and the correctness challenges associated with PBE.
Our inspiration to make program synthesis interactive comes from the standard world of programming.
In programming, interactivity manifests in at least three key dimensions:
\begin{description}[topsep=2pt, itemsep=1pt]
    \item[Incremental:] A programmer writes a function by iteratively refining it, for instance, as in test-driven development.
    \item[Step-based:] A programmer splits the task of writing a program into simpler steps of writing various functions for individual
        sub-tasks one by one.
    \item[Feedback-based:] A programmer may use various tools, such as program analysis or test coverage measurement, to obtain
        actionable feedback on code quality, correctness, and performance.
\end{description}
We propose integrating these dimensions intro the interactive process of program synthesis.

\paragraph{Incremental synthesis}
The standard PBE model requires the user to refine her intent in iterative rounds by providing additional constraints on the current
candidate program.
The standard approach has been to re-run the synthesizer afresh with the conjunction of the original constraints and the new
constraints.
In this paper, we describe an alternative technique, which makes the synthesizer \emph{incremental} and provides significant
performance benefits.

Most PBE techniques use a data structure called \emph{version space algebra} (VSA)~\cite{flashmeta} to succinctly represent and compute
the set of programs in the underlying DSL that are consistent with the user-provided constraints.
Our key idea is to observe that a VSA is simply \emph{an Abstract
  Syntax Tree (AST) based representation of a language}, and \emph{it can be translated into a sub-DSL of
the original DSL}.
The new round of synthesis is then carried out over this new sub-DSL using only the new constraints.

\paragraph{Step-based synthesis}
In many PBE domains, there exist well-defined sub-computations that can be exposed to the user intuitively as steps.
This helps with both synthesizer performance and ensuring correctness of the synthesized program.

For instance, the programs in the \ff DSL apply a conditional logic to compute different string transformations based upon the input string.
When the user provides a few input-output examples, the \ff synthesizer conjectures which of those examples will be addressed by the same branch of the top-level conditional.
This decision is based on a few examples and can be incorrect.
Furthermore, for a complicated task that requires many examples, the conditional learning algorithm in \ff does not
scale~\cite{wu2015iterative,kini15}.

The conditional logic is a sub-computation that can be naturally
exposed to the user as a clustering of input rows.
The user can drive this task by providing examples of rows that should be in the same cluster.
Our key idea to enable step-based synthesis is to \emph{allow associating user constraints with named subexpressions} in the DSL.

\paragraph{Feedback-based synthesis}
In the standard PBE model, the user is responsible for providing additional constraints in each iterative round of synthesis.
In each iteration, the user chooses what additional constraints to
provide, based purely on the behavior of the synthesized program, but
without any directed feedback from the synthesizer.
However, the synthesizer has knowledge about the specific ambiguities in the constraints provided by the user w.r.t. the underlying DSL.
Our key idea is to translate this knowledge into \emph{natural queries for the user}, whose responses \emph{form the next set of additional
constraints}.
This provides two key benefits:
\begin{enumerate}[label=(\alph*), topsep=3pt, itemsep=1pt]
    \item The feedback from the tool can help continue the progress towards the desired program by pointing out remaining ambiguities.
        Otherwise, the user may stop earlier than intended.
    \item The new set of constraints constructed from the user's response can accelerate the progress towards the desired program by
        resolving the ambiguity faster (relative to the constraints that the user would have provided otherwise).
        To this end, we introduce a novel automatic component in the conventional PBE model, called \emph{the \hypothesizer}.
        It proactively analyzes the current set of candidate programs and constructs a set of queries that, if answered by the user, would
        best resolve the ambiguities.
\end{enumerate}

We give illustrations of different kinds of queries for real-world PBE domains that can be easily answered by the user with little
cognitive load and that help reduce the ambiguity significantly.
To avoid asking too many queries, we associate queries with a \emph{disambiguation score} that represents the benefit of asking that query
for ambiguity resolution.
We generate feedback only if this score exceeds a certain threshold.
Our experimental results illustrate that our strategy for choosing the disambiguation score and the threshold leads to few false positives
and almost no false negatives.

\paragraph{Contributions}
This paper makes the following contributions:
\begin{itemize}[topsep=2pt, itemsep=2pt]
    \item We formally define the general problem of interactive program synthesis, extending upon the CEGIS~\cite{sketch,ogis},
        SyGuS~\cite{sygus}, and FlashMeta~\cite{flashmeta} formalisms.
    \item We propose an approach for incremental synthesis that leverages the observation that VSAs can be viewed as DSLs.
        We present experimental results on its performance benefits.
    \item We show how to model step-based interaction as part of the general program synthesis paradigm.
        We present experimental analysis of its effectiveness in improving synthesis convergence.
    \item We present various feedback strategies for helping users in refining their intent.
        We present qualitative and quantitative results on the
        effectiveness of these strategies in ensuring that the
        synthesized programs are correct.
\end{itemize}

\paragraph{Structure}
This paper is structured as follows.
\Cref{sec:background} provides some background on the problem of PBE,
and introduces the \ff, \fe, and \fs DSLs, which are used as running examples throughout
the paper.
\Cref{sec:overview} gives a high-level overview of our interactive synthesis formulation.
The next 3 sections formally describe each problem dimension: incremental (\Cref{sec:incremental}), step-based (\Cref{sec:step}),
and feedback-based (\Cref{sec:feedback}) synthesis.
\Cref{sec:evaluation} presents our evaluation for each dimension.
Finally, \Cref{sec:related} reviews related work, and \Cref{sec:conclusion} concludes.

\section{Background}
\label{sec:background}

In this section, we introduce three DSLs that are used as case studies in the paper, and provide some background on inductive synthesis.

\subsection{Domain-Specific Language}
We follow the formalism of FlashMeta (a.k.a. the \projectName framework)~\cite{flashmeta}.
A synthesis problem is defined for a given \emph{domain-specific language} (DSL) $\dsl$.
A DSL is specified as a context-free grammar (CFG), with each nonterminal symbol $N$ defined through a set of \emph{rules}.
Each rule is an application of an \emph{operator} to some symbols of $\dsl$.
All symbols and operators are \emph{typed}.
If $N := F(N_1, \dots, N_k)$ is a grammar rule and $N\colon \tau$, then the output type of $F$ must be $\tau$.
A DSL has a designated \emph{output symbol} $\start{\dsl}$, which is a start nonterminal in the CFG of $\dsl$.

Every (sub-)program $P$ rooted at a symbol $N\colon \tau$ in $\dsl$ maps an \emph{input state}\footnote{DSLs in PBE typically do not
    involve mutation, so an input $\state$ is technically an \emph{environment}, not a \emph{state}.
    We keep the term ``state'' for historical reasons.} $\state$ to a value of type $\tau$.
A state is a mapping of free variables $\freeVariables{P}$ to their bound values.
Variables in a DSL are introduced by \texttt{let} definitions and $\lambda$-functions.
The output symbol has a single free variable---an \emph{input symbol} $\inputSymbol{\dsl}$ of the DSL.
For brevity, we use the notation $N [x := v]$ for ``$\mathtt{let}\ x = v\ \mathtt{in}\ N$''.

Every operator $F$ in a DSL $\dsl$ has some executable semantics.
Many operators are generic, and typically reused across different DSLs (e.g. $\mathsf{Filter}$ and $\mathsf{Map}$ list combinators).
Others are domain-specific, and defined only for a given DSL.
Operators are assumed to be deterministic and pure, modulo unobservable side effects.

\begin{figure}
    \centering
    \begin{mdframed}[innerleftmargin=-6pt, innerrightmargin=-3pt, innertopmargin=-3pt, innerbottommargin=-3pt]
        \footnotesize
        \begin{lstlisting}[language=dsl, gobble=8]
            language FlashFill;

            @output string $start$ := $e$ | std.ITE($cond$, $e$, $start$);
            string $e$ $\hspace{1pt}$:= $f$ | Concat($f$, $e$);
            string $f$ := ConstStr($w$)
                          | let string $x$ = std.Kth($vs$, $k$) in $sub$;

            string $sub$ $\hspace{46pt}$:= SubStr($x$, $pp$);
            Tuple<int, int> $pp$ $\hspace{15pt}$:= std.Pair($pos$, $pos$);
            int $pos$ $\hspace{58pt}$:= AbsPos($x$, $k$) | RegPos($x$, $rr$, $k$);
            Tuple<Regex, Regex> $rr$ := std.Pair($r$, $r$);

            bool $cond$ := let string $s$ = std.Kth($vs$, $k$) in $b$;
            @extern[std.text.match] bool $b$; $\hspace{0.8cm}$ // $\freeVariables{b} = \left\{ s\colon \text{\textbf{\texttt{string}}} \right\}$
            @input string[] $vs$;  $\qquad$  string $w$;  $\qquad$   int $k$;   $\qquad$ Regex $r$;
        \end{lstlisting}
    \end{mdframed}
    \vspace{-0.7\baselineskip}
    \caption{\ff DSL $\ffdsl$ for string transformations in spreadsheets~\cite{flashfill}.
    Each program rooted at $start$ takes an input a spreadsheet row~$vs$ and performs a chain of \texttt{if}-\texttt{elseif} matches on some
    cells of~$vs$.
    The expression in the chosen \texttt{ITE} branch returns a concatenation of constants and input substrings.}
    \label{fig:overview:flashfill}
    \vspace{-0.3\baselineskip}
\end{figure}

\begin{figure}
    \begin{mdframed}[innerleftmargin=-6pt, innerrightmargin=-3pt, innertopmargin=-3pt, innerbottommargin=-3pt]
        \begin{lstlisting}[language=dsl, gobble=8]
            language FlashExtract.Sequence;

            @output StringRegion[] $seq$ :=
                 $\hspace{1pt}$ std.Map($\lambda\, x \Rightarrow$ std.Pair($pos$, $pos$), $lines$) $\hspace{0.15cm}$        // LinesMap
                | std.Map($\lambda\, t \Rightarrow$ let string $x$ = GetSuffix($d$, $t$) in
                                 std.Pair($t$, $pos$), $posSeq$) $\hspace{0.40cm}$ // StartSeqMap
                | std.Map($\lambda\, t \Rightarrow$ let string $x$ = GetPrefix($d$, $t$) in
                                 std.Pair($pos$, $t$), $posSeq$); $\hspace{0.35cm}$ // EndSeqMap

            int[] $posSeq$ := std.FilterInt($i_0$, $k$, $rrSeq$);
            int[] $rrSeq$ $\hspace{4pt}$:= RegexMatches($d$, $rr$);
            StringRegion[] $lines$ $\hspace{14pt}$:= std.FilterInt($i_0$, $k$, $fltLines$);
            StringRegion[] $fltLines$ := std.Filter($\lambda\, s \Rightarrow b$, $allLines$);
            StringRegion[] $allLines$ $\hspace{1pt}$:= SplitLines($d$);

            @extern[std.text.match] bool $b$; $\hspace{0.72cm}$ // $\freeVariables{b} = \left\{ s\colon \text{\textbf{\texttt{string}}} \right\}$
            @extern[FlashFill] int $pos$;  $\hspace{0.90cm}$  // $\freeVariables{pos} = \left\{ x\colon \text{\textbf{\texttt{string}}} \right\}$
            @input StringRegion $d$; $\qquad$  int $i_0$;   $\qquad$ int $k$;
        \end{lstlisting}
    \end{mdframed}
    \vspace{-0.7\baselineskip}
    \caption{\fe DSL $\feseqdsl$ for selecting a \emph{sequence of spans} in a textual document $d$~\cite{flashextract}.
        Each program rooted at $seq$ is either a \textsf{LinesMap} program (split $d$ into lines and select a span in each line), a
        \textsf{StartSeqMap} program (select a sequence of starting positions of the spans and map each to its corresponding ending
        position), or a \textsf{EndSeqMap} program (select a sequence of ending positions of the spans and map each to its corresponding
        starting position).
        This DSL references position extraction logic $pos$ from $\ffdsl$ (\Cref{fig:overview:flashfill}) and string predicates $b$ from the
    standard library.}
    \label{fig:overview:flashextract}
    \vspace{-0.3\baselineskip}
\end{figure}

\begin{figure}
    \begin{mdframed}[innerleftmargin=-6pt, innerrightmargin=-3pt, innertopmargin=-3pt, innerbottommargin=-3pt]
        \begin{lstlisting}[language=dsl, gobble=8]
            language FlashExtract;

            @output TreeNode $E$ := $struct$ |  $arr$;
            ObjectNode $struct$ $\hspace{7pt}$:= Struct($E_1$, $\dots$, $E_n$) | Prop($id$, $E_r$);
            ArrayNode $arr$ $\hspace{21pt}$:= Seq($id$, $E_s$) | std.Map($\lambda\, d \Rightarrow$ $E$, $E_s$);

            StringRegion[] $E_s$ $\hspace{4pt}$:= @extern[FlashExtract.Sequence] $seq$;
            StringRegion $E_r$ $\hspace{12pt}$:= @extern[FlashFill] $sub$[$x$ := $d$];
            @input StringRegion $d$; $\qquad$ string $id$;
        \end{lstlisting}
    \end{mdframed}
    \vspace{-0.7\baselineskip}
    \caption{\fe meta-DSL $\fedsl$ for extraction of a dataset from a textual document $d$~\cite{flashextract}.
    Each program rooted at $E$ builds an object tree of \emph{sequences} and \emph{structs}, extracted from $d$.
    They are extracted using sequence selection logic from $\feseqdsl$ (\Cref{fig:overview:flashextract}) and substring selection logic from
    $\ffdsl$ (\Cref{fig:overview:flashfill}), respectively.
    The leaves of the tree are \emph{field programs}: region or sequence extractions.
    Each field program is marked with a unique ID.}
    \label{fig:overview:schema}
\end{figure}

\begin{figure}
	\begin{mdframed}[innerleftmargin=-6pt, innerrightmargin=-3pt, innertopmargin=-3pt, innerbottommargin=-3pt]
		\begin{lstlisting}[language=dsl, gobble=8]
			language FlashSplit;

			@output StringRegion[] $fields$ := SplitByDelimiters($v$, $d$);
			StringRegion[] $d$ := LookAround($v$, $c$, $rr$) | Union($d$, $d$);
			StringRegion[] $c$ := ExactMatches($v$, $s$);
                                 	| IncludeWhitespace($v$, $s$);
			Tuple<Regex, Regex> $rr$ := std.Pair($r$, $r$);

			@input StringRegion $v$; $\qquad$ string $s$; $\qquad$ Regex $r$;
		\end{lstlisting}
	\end{mdframed}
	\vspace{-0.7\baselineskip}
	\caption{\fs DSL $\fsdsl$ for \emph{splitting} an input record $v$ into a sequence of fields separated by delimiters $d$.
    Each delimiter is determined by a \textsf{LookAround} operator, which represents a constant string match $c$ in
    $v$ that also matches regular expressions $r_1$ and $r_2$ on the left and right side respectively (similar to position extraction logic
    from $\ffdsl$ in \Cref{fig:overview:flashfill}).
    Constant string matches are either exact matches of a string, or including any surrounding whitespace. }
	\label{fig:overview:flashsplit}
\end{figure}

\paragraph{\ff DSL}
\Cref{fig:overview:flashfill} shows an excerpt from the definition of \ff DSL, which transforms a list of strings (i.e., a spreadsheet row) into an output string.
In this DSL, non-prefixed operators such as \textsf{AbsPos} and \textsf{Concat} are user-defined, while namespace-prefixed operators such as \textsf{std.Kth} are defined in the standard library of \projectName.
Symbols marked as \textbf{\texttt{@extern}} reference symbols in other DSLs or in the standard library.

\paragraph{\fe DSL}
\Cref{fig:overview:schema} shows the definition of \fe DSL.
A \fe program extracts from a text document $d$ a hierarchical tree that consists of nested structs and sequences.
A leaf of this tree is either a sequence $seq$ of regions (i.e., StringRegions) or a region $sub$ within a parent node.
We refer to both $seq$ and $sub$ as a \emph{field} in the program.
All fields in $\fedsl$ are associated with some IDs. Each program corresponds to a \emph{schema} that defines the structure of the output tree.
The logic for extracting a sequence of spans within a given region is separated into a sub-DSL $\feseqdsl$ (\Cref{fig:overview:flashextract}).
The logic for extracting a subregion within a region is imported from $\ffdsl$.

\begin{figure}[t]
    \centering
    \vspace{-1.0\baselineskip}
    \includegraphics[width=\columnwidth]{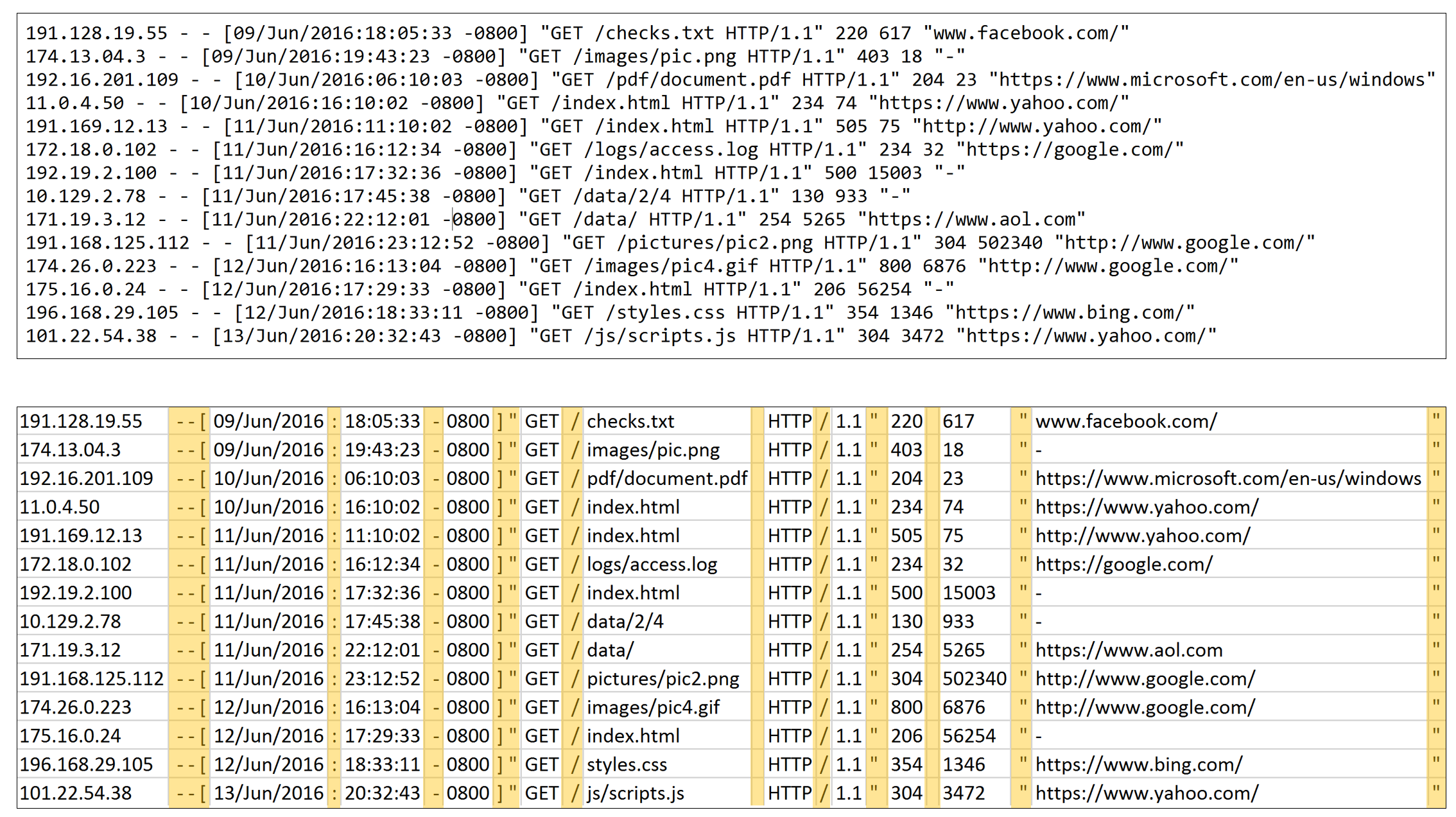}
    \vspace{-0.5\baselineskip}
    \caption{A sample splitting of a log file from a web server into fields.
        Fields are delimited by various delimiters, such as \stringliteral{- - [} and \stringliteral{] ''}.
        Note how \stringliteral{/} is used as a delimiter between some fields, but also occurs as a non-delimiter in other fields such as the URLs.}
    \vspace{0.5\baselineskip}
    \noindent\rule[0.5ex]{\linewidth}{1pt}
    \vspace{-2.5\baselineskip}
    \label{fig:split:motivation}
\end{figure}

\paragraph{\fs DSL}
    \Cref{fig:overview:flashsplit} shows the definition of the \fs DSL.
    \fs is a PBE system for field splitting in text-based data sources (e.g. log files).
    Such files often contains data rows with a large number of fields, separated by arbitrary delimiters.
    \Cref{fig:split:motivation} shows a sample splitting of a web server log.

    A \fs program splits an input data record into a sequence of field values separated by delimiters.
    In practice, a string that is used as a delimiter between some fields also occurs inside other field
    values.
    To address this, \fs supports \emph{contextual delimiters}, which match constant strings that occur
    between certain regular expression patterns on the left and right.

\subsection{Inductive Synthesis Problem}
An \emph{inductive synthesis} problem refers to synthesis of a \emph{program set} $\vsa \subset \dsl$ that is consistent with a given
\emph{inductive specification} $\spec$.
An inductive specification (or simply ``spec'') is a collection of input-output constraints $\left\{ \state_i \tospec \constraint_i
\right\}_{i=1}^n$.
Each \emph{constraint} $\constraint_i$ is a unary Boolean predicate over the \emph{output} of the desired program on the corresponding input
state $\state_i$.
The simplest form of $\constraint$ is a concrete output value; in this case, $\spec$ is a collection of \emph{input-output examples}.
In this work, we use PBE and inductive synthesis interchangeably.

A program $P$ \emph{satisfies} a spec $\spec$ (written $P \models \spec$) iff it satisfies all constraints $\state_i \tospec \constraint_i$ in $\spec$.
A program $P$ satisfies an input-output constraint $\state \tospec \constraint$ iff its output $\semantics{P}{\state}$ on the given
input state $\state$ satisfies the corresponding constraint predicate $\constraint$, i.e. if $\constraint(\semantics{P}{\state})$ is true.
A program set $\vsa$ is said to be \emph{valid} w.r.t. a spec $\spec$ (written $\vsa \models \spec$) iff all programs $P \in \vsa$ satisfy
$\spec$.

A \emph{synthesis algorithm}, given a symbol $N \in \dsl$ and a spec $\spec$, learns a \emph{valid} program set $\vsa$ of
programs rooted at $N$.
We denote the corresponding problem definition as $\mathsf{Learn}(N, \spec)$.
By definition, the algorithm must be \emph{sound}---every program in $\vsa$ must satisfy $\spec$.
A synthesis algorithm is said to be \emph{complete} if it learns \emph{all possible} programs in $\dsl$ that satisfy $\spec$.
Since $\dsl$ is usually infinite because of unrestricted constants, completeness is usually defined w.r.t. some
finitization of $\dsl$ (possibly dependent on a given synthesis problem).

\begin{example}
    A typical spec for \ff synthesis is shown below:
    \[
        \spec =
        \Biggl\{
        \begin{aligned}
            &\left\{vs \mapsto [\stringliteral{323-708-7700}]\right\} \tospec \stringliteral{(323) 708-7700} \\
            &\left\{vs \mapsto [\stringliteral{555.988.0139}]\right\} \tospec \stringliteral{(555) 988.0139}
        \end{aligned}
    \]
    It consists of 2 constraints $\constraint_1$ and $\constraint_2$.
    Each constraint $\constraint_i$ is an \emph{example constraint}: it maps a single input state $\state_i$ to the desired \ff output
    string $o_i$.
    In each input state $\state_i$, the input variable $vs$ of $\ffdsl$ is bound to a single input spreadsheet cell.
\end{example}

\begin{example}
    A typical spec for \fe sequence synthesis is shown below:
    \[
        \spec =
        \left\{ d \mapsto \stringliteral{Brazil 23 21\textbackslash n Bulgaria 35 32\textbackslash n} \right\} \tospec
        [\stringliteral{Brazil}, \dots]
    \]
    It contains a single \emph{prefix constraint} $\constraint$ with an input document $d$ in the state $\state$ and a
    \emph{prefix} of the desired sequence of selections.
\end{example}

\subsection{Version Space Algebra}
A \emph{version space algebra} (VSA) is a data structure for efficient storage of candidate programs in deductive synthesis.
Since deductive synthesis typically works with large program sets (up to $10^{50}$ programs), it requires a special data structure to
represent them in polynomial space and perform polynomial-time set operations.
We refer the reader to~\cite[\S 4]{flashmeta} for a detailed overview of VSAs; this section provides only a brief background.

\begin{defn}[Version space algebra]
    \label{def:vsa}
    Let $N$ be a symbol in a DSL $\dsl$.
    A \emph{version space algebra} is a representation for a set $\vsa$ of programs rooted at $N$. The grammar of VSAs is:
    \vspace{1pt}
    \begin{align*}
        \vsa \is \left\{P_1, \dots, P_k\right\} \palt \vsaunion(\vsa_1, \dots, \vsa_k) \palt \vsajoin(\vsa_1, \dots, \vsa_k)
    \end{align*}
    where $F$ is any $k$-ary operator in $\dsl$, and $P_j$ are some programs in $\dsl$.
    The semantics of VSA as a set of programs is given as follows:
    {\small \begin{align*}
        &\hspace{-0.5\parindent} P \in \left\{P_1, \dots P_k\right\} \tag*{if $\exists\, j\colon P = P_j$} \\
        &\hspace{-0.5\parindent} P \in \vsaunion(\vsa_1, \dots, \vsa_k) \tag*{if $\exists\, j\colon P \in \vsa_j$} \\
        &\hspace{-0.5\parindent} P \in \vsajoin(\vsa_1, \dotsc, \vsa_k) \tag*{if $P = F(P_1, \dots,
            P_k) \wedge \forall j\colon P_j \in \vsa_j$}
    \end{align*}}
\end{defn}

Intuitively, a VSA is a DAG where each node represents a set of programs.
\emph{Leaf nodes} contain explicit enumerations of programs; they are composed into larger sets by two possible VSA constructor nodes.
\emph{Union nodes} represent a set union of their constituent VSAs.
\emph{Join nodes} represent a cross-product of their constituent VSAs, with an associated operator $F$ applied to all combinations of
parameter programs from the cross-product.

VSAs support multiple efficient set operations.
In PBE we typically make use of: intersection $\vsa_1 \cap \vsa_2$, clustering based on program outputs on a given input $\clustering$,
ranking w.r.t. a scoring function $\mathsf{Top}_h(\vsa, k)$, and projection (filtering) onto a subset of programs satisfying a given spec
$\projection$.

\subsection{Backpropagation}
In the FlashMeta formalism, the main synthesis algorithm typically employed for PBE is \emph{backpropagation}, or \emph{deductive
synthesis}.
It follows the grammar of $\dsl$ \emph{top-down}, applying the principle of divide-and-conquer.
At each step, it reduces the synthesis problem $\mathsf{Learn}(N, \spec)$ to simpler subproblems: either on parameters of the symbol
$N$, or on subexpressions of the spec $\spec$.

Deductive synthesis makes use of small domain-specific procedures called \emph{witness functions}.
They \emph{backpropagate} constraints on a program $F(N_1, \dots, N_k)$ to deduced constraints on its subexpressions $N_1, \dots, N_k$.
Two kinds of witness functions exist: conditional and non-conditional.
Non-conditional witness functions take as input a spec $\spec$ on $F$ and transform it into a spec on their respective parameter $N_i$.
Conditional witness functions take as input a spec~$\spec$ on $F$ with a bound value $v_j$ of some other parameter $N_j$, and transform
$\spec$ into a spec on parameter $N_i$ under the assumption that $\semantics{N_j}\state = v_j$.
Intuitively, conditional witness functions introduce branching in the top-down search process of deductive synthesis.
They split the search space into disjoint partitions based on possible outputs of a target subprogram rooted at $N_j$, and then continue
with synthesis in each partition independently.

At a high level, deductive synthesis solves a problem $\mathsf{Learn}(N, \spec)$ via a combination of 3 problem reduction techniques
(see \cite[\S 5]{flashmeta} for a detailed presentation):

\begin{description}
    \item[Split on alternatives:]
        If $N := N_1 \palt N_2$, then the algorithm solves the subproblems $\mathsf{Learn}(N_1, \spec)$ and $\mathsf{Learn}(N_2, \spec)$,
        and takes a union of results: $\vsa = \vsa_1 \vsaunionop \vsa_2$.
    \item[Backpropagation of witnesses:]
        If $N := F(N')$, then the algorithm invokes the corresponding \emph{witness function} $\omega_{N'}$ for~$N'$ in~$F$.
        The witness function transforms the spec $\spec$ into a necessary (and often sufficient) spec $\spec'$ on $N'$.
        The algorithm then solves the subproblem $\mathsf{Learn}(N', \spec')$.

        If $\omega_{N'}$ is precise (i.e., $\spec'$ is sufficient), then any valid program $P' \in \vsa'$, when used as an argument
        $F$, produces a valid program $F(P') \models \spec$.
        Thus, the program set $\vsajoin(\vsa')$ is a valid solution for $\spec$.
        If $\omega_{N'}$ is imprecise (i.e., $\spec'$ is only necessary), then the algorithm returns a projection
        $\projection[\vsajoin(\vsa')]$.

    \item[Split on conditional execution:]
        It is often impossible to construct a precise witness function for a parameter program under all possible spec conditions.
        However, it is usually possible assuming additional restrictions on \emph{other} parameters of the same operator.
        If $N := F(N_1, N_2)$, often $F$ permits two backpropagation procedures: a simple witness function $\omega_1(\spec)$ for the first
        parameter, and a \emph{conditional} witness function $\omega_2(\spec \assuming \semantics{N_1}\state = v)$ for the second parameter.
        The function $\omega_2$ produces a spec $\spec_2$ for $N_2$ that is necessary (and often sufficient) to satisfy $\spec$
        \emph{under the assumption} that the program chosen for $N_1$ evaluates to $v$.

        In such situation, the algorithm first invokes $\omega_1$ and solves the produced subproblem $\mathsf{Learn}(N_1, \spec_1)$.
        It then \emph{clusters} $\vsa_1$ on the given inputs, and splits the search into independent branches, one per cluster (i.e. one per
        each possible output of programs from $\vsa_1$).
        Within each branch, it operates under the assumption that all programs in a cluster $\vsa_{1i} \subset \vsa_1$ produce the same
        concrete value $v_i$ as an output.
        Given that value, the function $\omega_2$ produces a spec $\spec_{2i}$ for $N_2$, and the algorithm solves the subproblem
        $\mathsf{Learn}(N_2, \spec_{2i})$.
        The final result is a union of solution sets over all branches: $\vsa = \bigvsaunion_i \vsajoin(\vsa_{1i}, \vsa_{2i})$, valid by
        construction.
\end{description}

\section{Overview}
\label{sec:overview}

A typical workflow of a PBE session with an end user follows a flowchart shown in \Cref{fig:overview:old-workflow}.
The synthesis system is parameterized with (i) \emph{a DSL} $\dsl$, which defines a search space for target programs, and (ii) \emph{a
ranking function} $h$, which resolves ambiguity between multiple program candidates.
The user communicates her intent to the system in the form of input-output examples (or, more generally, constraints) $\spec$ for the
desired program.
The system performs a search in $\dsl$ for the subset of programs that are consistent with $\spec$, ranks them w.r.t. $h$, and returns
top-ranked candidate program(s) to the user.
The user inspects the program(s), and, if it does not match her desired behavior, refines the spec $\spec$ by introducing additional
examples (or constraints), without any help from the synthesizer.
The system now searches for a subset of programs in $\dsl$ that are consistent with the new spec $\spec'$.
This cycle continues until either (a) the user is satisfied with the current program, or (b) the system discovers that the
current spec is unsatisfiable in $\dsl$.

\Cref{fig:overview:old-workflow} and similar flowcharts in synthesis literature implement different variants of \emph{counterexample-guided
inductive synthesis} (CEGIS)~\cite{sketch}, a common inductive synthesis technique.
While effective in many applications, we found this workflow lacking in several aspects when applied on a \emph{mass-market industrial
scale} with an end user playing the role of an \emph{oracle}.
We outline our observations and solutions to associated problems below.

\paragraph{Incrementality}
In conventional CEGIS, the learner uses the refined spec $\spec'$ at each iteration to synthesize a new valid program set $\vsa \models
\spec'$.
All the information accumulated in the synthesis session is contained in $\spec'$ as a conjunction of provided examples, counterexamples,
and more general constraints.
Typically in PBE, the learner takes it all into account by solving a fresh synthesis problem, searching in the DSL $\dsl$ for programs
that are consistent with all constraints in $\spec'$.
As the size of $\spec'$ grows with each iteration, this synthesis problem becomes more complex, thereby slowing down the search
process~\cite{bitvectors}.

Our key observation here is that \emph{the program set $\vsa_i$ learned at iteration $i$ can be transformed into a new DSL $\dsl'$, which
will become the search space for synthesis in iteration $i+1$ instead of $\dsl$}.
Notice that every refined spec $\spec'$ imposes an \emph{additional} restriction on the desired program.
Thus, an $i^{\text{th}}$ program set $\vsa_i$ must be a subset of the program set $\vsa_{i-1}$, learned at the previous iteration.

We developed an efficient procedure for transforming a VSA $\vsa_i$ into a DSL definition $\dsl_{i+1}$, which replaces $\dsl$ in the next
iteration of synthesis.
This replacement achieves two significant speedups.
First, the size of $\dsl_i$ is monotonically decreasing, and quickly becomes many orders of magnitude smaller than $\dsl$.
Second, at each iteration~$i$ we are only searching for programs consistent with the latest introduced constraint $\constraint_i$, since the
DSL $\dsl_i$ by construction only contains programs that satisfy previous constraints $\constraint_1, \dots, \constraint_{i-1}$.

\paragraph{Step-based formulation}
In conventional CEGIS, the user is limited to providing constraints on the overall behavior of the desired program.
Apart from the current candidate program $P$, the user does not have any insight into the learner and its configuration (i.e., $\dsl$ and
$h$).
Thus, the user's guidance is limited to \emph{counterexamples} to the candidate program, which the learner includes in the refined
spec~$\spec'$.
As the number of refining iterations grows, so does the user's frustration, since she cannot influence the debugging experience without any
understanding of the learner.

Many synthesis tasks have a notion of \emph{sub-tasks}, on which the user can easily and naturally specify individual constraints.
One way to model this is to support synthesis of various sub-tasks as independent synthesis tasks in respective sub-DSLs.
In that case management of the composition of those sub-tasks lies with the application layer on top of the synthesis sub-systems.
This is non-trivial, sophisticates the application logic, and is often implemented in an application-specific manner
disallowing code re-use.


In our formalism, we model such interaction by allowing the user to provide constraints on \emph{named subexpressions} in a \emph{compound
DSL} instead of just top-level constraints.
These named subexpressions correspond to aforementioned sub-tasks (or ``steps'') in the interaction process.
Analogously to bottom-up programming, we allow the user to define building blocks of the target program \emph{step-by-step} first, before
assembling them into a larger expression.

In addition to simplifying synthesis application development, step-based
formalism also improves the debugging experience during learner-user interaction.
In conventional CEGIS, when the current candidate program is incorrect, the user has to analyze the behavior of the whole program, come up
with a counterexample, and communicate it to the learner, starting a new iteration.
In contrast, with the step-based formulation she can focus on learning individual named subexpressions in the program first.
In addition to reducing the scope of required reasoning, it also allows the user to ``lock'' individual subexpressions as correct.
The synthesizer can then leverage knowledge about their behavior when learning other subexpressions of the desired program, thereby
completing the entire synthesis task in fewer iterations.
In our evaluation (\Cref{sec:step-based-eval}) we have verified that the step-based formulation significantly reduces the number of
examples for many real-world synthesis tasks.



\tikzstyle{flowchart} = [semithick, align=center, >=stealth', every path/.style={->, font=\small}]
\tikzstyle{actor} = [draw, rounded corners, inner sep=6pt, node distance=2cm, font=\sffamily, outer sep=3pt, fill=blue!20]
\tikzstyle{data} = [inner sep=5pt]
\tikzstyle{empty} = [inner sep=0pt, outer sep=0pt]
\tikzstyle{pass} = [densely dashed, thin, >=spaced open triangle 45, color=gray]
\tikzstyle{edgelabel} = [sloped, anchor=center, above]

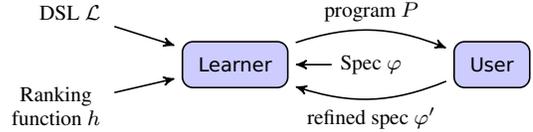
\begin{figure}
    \centering
    \begin{tikzpicture}[flowchart]
        \node[actor] (learner) {Learner};
        \node[actor, right=of learner] (user) {User};
        \node[data, below left=-0.3cm and 0.8cm of learner] (ranking) {Ranking \\ function $h$};
        \node[data, above left=0.0cm and 0.8cm of learner] (dsl) {DSL $\dsl$};
        \draw (dsl) to (learner);
        \draw (ranking) to (learner);
        \draw (learner) to["program $P$", bend left=20] (user);
        \path (user) to node[midway] (mid) {Spec $\spec$} (learner);
        \draw (mid) to (learner);
        \draw (user) to["refined spec $\spec'$", bend left=20, auto=left] (learner);
    \end{tikzpicture}
    \caption{Learner-user communication in conventional PBE.}
    \vspace{-1\baselineskip}
    \label{fig:overview:old-workflow}
\end{figure}
\begin{figure}
    \centering
    \begin{tikzpicture}[flowchart]
        \node [actor] (learner) {Learner};
        \node [actor, right=of learner] (user) {User};
        \path (learner) to coordinate[midway] (mid) (user);
        \node [actor, above=2.0cm of mid] (analyzer) {\xmakefirstuc{\hypothesizer}};
        \node [data, left=of learner] (ranking) {Ranking \\ function $h$};
        \draw (ranking) to (learner);
        \draw (learner) to ["VSA $\vsa$" {edgelabel, pos=0.33}, bend left=20]
            node[empty, pos=0.85] (lstart) {}
            (analyzer);
        \draw (user) to["spec $\spec$"]
            node[empty, pos=0.85] (rend) {}
            (learner);
        \node [data, above left=1.2cm and 1.0cm of learner] (dsl) {DSL $\dsl$};
        \draw (dsl) to
            node[empty, midway] (lend) {}
            (learner);
        \draw (analyzer.south east) to["question $q$" edgelabel, bend left=10] (user);
        \draw (user) to["response $r$" {edgelabel, below, pos=0.37}, bend left=10]
            node[empty, pos=0.88] (rstart) {}
            ($(analyzer.south east) - (0.4,0)$);
        \draw[pass] (rstart) to["refined spec $\spec'$"' edgelabel, bend right=20] (rend);
        \draw[pass] (lstart) to["refined DSL $\dsl'$"' edgelabel, bend right=30] (lend);
    \end{tikzpicture}
    \caption{Learner-user communication in interactive PBE.}
    \label{fig:overview:new-workflow}
\end{figure}
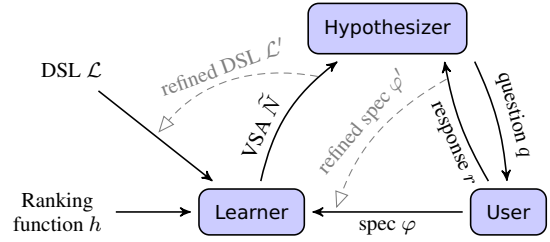

\paragraph{Feedback-based interaction}
Discovering counterexamples is relatively easy when the CEGIS oracle is a modern SMT solver (for domains that can be modeled in an SMT
theory) and when the full spec is known.
However, an end user serving as the oracle often does not have a clear understanding of the full spec, and may suffer significant cognitive
load with this task.
At every iteration, the current candidate program set $\vsa_i$ contains thousands of ambiguous programs, and humans struggle with reasoning
about possible ambiguities in intent specification.
In contrast, the synthesis system can analyze ambiguities in $\vsa_i$ and derive the most efficient way to resolve them by
proactively soliciting concrete knowledge from the user.
This observation introduces a third important actor in the CEGIS flowchart, which we call \emph{the \hypothesizer}.

Formally, any \emph{constraint type} used in PBE (e.g. example, prefix, output type) states a \emph{property} on a subset of the DSL.
Given a program set $\vsa_i$, the \hypothesizer deduces possible properties that best disambiguate among programs in $\vsa_i$.
Any such property is convertible to a Boolean or multiple-choice question $q$, which the \hypothesizer asks the user.
Any response $r$ for $q$ is convertible to a concrete constraint $\constraint$, which begins a new iteration of synthesis.

Such \emph{feedback-based} interaction has several major benefits.
First, it reduces the cognitive load on the user: instead of analyzing the program's behavior, she only answers concrete questions.
Second, it significantly reduces the number of synthesis iterations thanks to the \hypothesizer's insight into the program
set $\vsa$ and its choice of disambiguating questions.
Finally, feedback and proactiveness increases the user's confidence and trust in the system.

\subsection*{Interactive Synthesis}

\Cref{fig:overview:new-workflow} shows a typical workflow in \emph{interactive program synthesis}.
As before, the learner is parameterized with a DSL $\dsl = \dsl_0$ and a ranking function $h$.
Suppose the initial user-provided spec $\spec = \spec_1$ consists of a single constraint $\constraint_1$.

The learner synthesizes a valid program set $\vsa_1 \models \constraint_1$.
This set becomes the search space $\dsl_1$ for the next iteration of synthesis.
To refine the spec for the next iteration, the learner either waits for new constraints from the user, or proactively invokes the
\hypothesizer.
The \hypothesizer analyzes the set $\vsa_1$ and generates the best disambiguating question $q_1$ for the user.
After the user answers it with a response $r_1$, the \hypothesizer translates it to a constraint $\constraint_2$, appends it to the spec,
and invokes the next synthesis iteration with the refined spec $\spec_2$.
The learner synthesizes a valid program set $\vsa_2 \models \spec_2$ from the DSL $\dsl_1$, and the cycle
continues until convergence or unsatisfiability.

At each iteration, the user can change the context of the synthesis and specify constraints either on the overall program $P$, or on some
named subexpression $P'$ in the program.
The learner then continues synthesis for this subexpression.
From this moment, any iteration that changes a candidate program for $P'$ also triggers an incremental relearning of any subexpressions in
$P$ that are dependant on $P'$.

The challenges of conventional CEGIS described above motivate the need for modeling the interactive workflow in a first-class manner in the
program synthesis formalism.
Building on the commonly used formulations of CEGIS~\cite{sketch,ogis} and SyGuS~\cite{sygus}, we extend their problem definition to
incorporate learner-user interaction.

\begin{problem}[Interactive Program Synthesis]
    Let $\dsl$ be a DSL, and $N$ be a symbol in $\dsl$.
    Let $\synalgorithm$ be an \emph{inductive synthesis algorithm} for $\dsl$, which solves problems of type $\mathsf{Learn}(N,
    \spec)$ where $\spec$ is an \emph{inductive spec} on a program rooted at $N$.
    The specs $\spec$ are chosen from a fixed class of \emph{supported spec types} $\Phi$.
    The result of $\mathsf{Learn}(N, \spec)$ is some set $\vsa$ of programs rooted at $N$ that are consistent with $\spec$.

    Let $\spec^{*}$ be a spec on the output symbol of $\dsl$, called a \emph{task spec}.
    A~\textbf{$\bm{\spec^{*}}$-driven interactive program synthesis process} is a finite series of 4-tuples
    $\langle N_0, \spec_0, \vsa_0, \Sigma_0\rangle, \dots, \langle N_m, \spec_m, \vsa_m, \Sigma_m\rangle$, where
    \begin{itemize}[itemsep=1pt, topsep=3pt]
        \item Each $N_i$ is a nonterminal in $\dsl$,
        \item Each $\spec_i$ is a spec on $N_i$,
        \item Each $\vsa_i$ is some set of programs rooted at $N_i$ s.t. $\vsa_i \models \spec_i$,
        \item Each $\Sigma_i$ is an \textbf{interaction state}, explained below,
    \end{itemize}
    which satisfies the following axioms for any program $P \in \dsl$:
    \begin{enumerate}[itemsep=1pt, topsep=3pt, label=\textbf{\Alph*.}]
        \item $(P \models \spec^{*}) \Rightarrow (P \models \spec_i)$ for any $0 \le i \le m$;
        \item $(P \models \spec_j) \Rightarrow (P \models \spec_i)$ for any $0 \le i < j \le m$ s.t. $N_i = N_j$.
    \end{enumerate}
    We say that the process is \textbf{converging} iff the top-ranked program of the last program set in the process satisfies the task
    spec:
    \[
        P^{*} = \mathsf{Top}_h(\vsa_m, 1) \models \spec^{*}
    \]
    and the process is \textbf{failing} iff the last program set is empty: $\vsa_m = \emptyset$.

    An \textbf{interactive synthesis algorithm} $\widehat{\synalgorithm}$ is a procedure (parameterized by $\dsl$, $\synalgorithm$, and $h$)
    that solves the following problem:
    \[
        \mathsf{LearnIter}\colon \left\{
        \begin{aligned}
            \langle N_0, \spec_0, \bot\rangle &\mapsto \langle \vsa_0, \Sigma_0\rangle \\
            \langle N_i, \spec_i, \Sigma_{i-1} \rangle &\mapsto \langle \vsa_i, \Sigma_i\rangle, \quad i > 0
        \end{aligned}
        \right.
    \]
    In other words, at each iteration $i$ the algorithm receives the $i^{\text{th}}$ learning task $\langle N_i, \spec_i\rangle$ and its own
    interaction state $\Sigma_{i-1}$ from the previous iteration.
    The type and content of $\Sigma_i$ is unspecified and can be implemented by $\widehat{\synalgorithm}$ arbitrarily.
    \label{problem:interactive}
\end{problem}

\begin{defn}
    We say that an interactive synthesis algorithm $\widehat{\synalgorithm}$ is \emph{complete} iff for any task spec $\spec^{*}$:
    \begin{itemize}[itemsep=1pt, topsep=3pt]
        \item If $\exists P \in \dsl$ s.t. $P \models \spec^{*}$ then $\widehat{\synalgorithm}$ eventually converges for any
            $\spec^{*}$-driven interactive synthesis process.
        \item Otherwise, $\widehat{\synalgorithm}$ eventually fails for any $\spec^{*}$-driven interactive synthesis process.
    \end{itemize}
\end{defn}

The notion of an interactive synthesis process formally models a typical learner-user interaction where $\spec^{*}$
describes the desired program.
The general nature of definitions in \Cref{problem:interactive} allows many different implementations for $\widehat{\synalgorithm}$.
In addition to completeness, different implementations (and choices for the state $\Sigma$) strive to satisfy different \emph{performance
objectives}, such as:
\begin{itemize}[topsep=1pt, itemsep=0pt]
    \item Number of interaction rounds (e.g. examples) $m$,
    \item The total amount of information communicated by the user,
    \item Cumulative execution time of all $m+1$ learning calls.
\end{itemize}
In the rest of this paper, we present several specific instantiations of interactive synthesis algorithms that optimize these objectives.

\section{Incremental Synthesis}
\label{sec:incremental}
Our incremental synthesis algorithm is based on two key ideas: \textbf{(a)} translation of VSAs as DSLs, and \textbf{(b)} local
resolution of different constraint types by memoization of intermediate subproblems.

\paragraph{VSA as a DSL}
Recall that a VSA $\vsa$ is a DAG-like program set representation with two kinds of constructor nodes (unions and joins) and one kind of
leaf nodes (explicit sets).
Our key observation here is that this representation is in fact simply an \emph{AST-based representation} of a sub-DSL $\dsl' \subset \dsl$.
More specifically, the DAG of $\vsa$ is isomorphic to a context-free grammar of a subset of $\dsl$.

\Cref{fig:incremental:algorithm} shows an algorithm for translating $\vsa$ into a grammar of $\dsl'$.
It performs an isomorphic graph translation, converting VSA unions ($\vsaunion$) into CFG alternatives ($N := N_1 \palt N_2$), VSA
joins~($\vsajoin$) into CFG operator productions ($N := F(\dots)$), and explicit program sets into CFG terminals annotated with their
possible values.

Note that as a subset of $\dsl$, the new DSL $\dsl'$ does not introduce any new operators.
Thus, all witness functions for $\dsl$ are still applicable for synthesis in $\dsl'$.
Moreover, $\dsl'$ is finite and its terminals are annotated with explicit sets of permitted values, which allows fast learning of constants
for any spec type simply via set filtering.

\begin{figure}
    \centering
    \small
    \begin{mdframed}[innerleftmargin=-8pt, innerrightmargin=3pt]
        \begin{algorithmic}[1]
            \Functionx{VsaToDsl}{VSA $\vsa$}
                \State Let $V$ be a set of fresh nonterminals, one per each non-leaf node in $\vsa$
                \State Let $\Sigma$ be a set of fresh terminals, one per each leaf node in $\vsa$
                \State \Comment{We write $\mathsf{sym}(\vsa') \in V \cup \Sigma$ to denote the corresponding fresh}
                \NoNumber \State \Comment{\,symbol for a node $\vsa'$ from $\vsa$}
                \addtocounter{ALG@line}{-1}
                \WithNumber \State Productions $R \gets \emptyset$
                \State \Comment{Create ``symbol $:=$ symbol'' productions for all union nodes}
                \ForAll{union nodes $\vsa' = \vsaunion(\vsa_1, \dots, \vsa_k)$ in $\vsa$}
                    \State $R \gets R \cup \{ \mathsf{sym}(\vsa') := \mathsf{sym}(\vsa_i) \mid i = 1 \dots k \}$
                \EndFor
                \State \Comment{Create operator productions for all join nodes}
                \ForAll{join nodes $\vsa' = \vsajoin(\vsa_1, \dots, \vsa_k)$ in $\vsa$}
                    \State $R \gets R \cup \{ \mathsf{sym}(\vsa') := F(\mathsf{sym}(\vsa_1), \dots, \mathsf{sym}(\vsa_k)) \}$
                \EndFor
                \State \Comment{Annotate terminal symbols with values extracted from leaf nodes}
                \ForAll{leaf nodes $\vsa' = \{ P_1, \dots, P_k \}$ in $\vsa$}
                    \State Annotate in $\Sigma$ that $\mathsf{sym}(\vsa') \in \{ P_1, \dots, P_k \}$
                \EndFor
                \State \Return the context-free grammar $G = \langle V, \Sigma, R, \mathsf{sym}(\vsa)\rangle$
            \EndFunction
        \end{algorithmic}
    \end{mdframed}
    \vspace{-0.5\baselineskip}
    \caption{An algorithm for translating a VSA $\vsa$ of programs in a DSL $\dsl$ into an isomorphic grammar for a sub-DSL $\dsl'
    \subset \dsl$.}
    \label{fig:incremental:algorithm}
    \vspace{-1.0\baselineskip}
\end{figure}

\paragraph{Constraint resolution}
Deductive synthesis relies on existence of witness functions, which backpropagate constraints top-down through the grammar.
Every witness function is defined for a particular \emph{constraint type}~$\constraint$ that it is able to decompose.
While some generic operators allow efficient backpropagation procedures for common constraint types (see, e.g.
\cite{feser2015synthesizing,flashextract}), most witness functions are domain-specific.

We have identified a useful set of constraint types that occur in various PBE domains and often permit efficient witness functions.
We broadly classify these constraints in 3 categories depending on their \emph{descriptive power}, with different incremental synthesis
techniques required for each category.
\begin{description}
    \item[\xmakefirstuc{\decomposable} constraints:] constructively \emph{define} a subset of the DSL by the means of backpropagation
        through witness functions.
        In other words, properties of the program's output that they describe are narrow enough to enable deductive reasoning.
        For instance:
        \begin{itemize}[itemsep=1pt]
            \item \emph{Example constraint}: ``output $= v$'',
            \item \emph{Membership constraint}: ``output $\in \left\{v_1, v_2, v_3\right\}$'',
            \item \emph{Prefix constraint}: ``output $= [v_1, v_2, \dots]$'',
            \item \emph{Subset/subsequence constraint}: ``output $\sqsupseteq [v_1, v_2, v_3]$''.
        \end{itemize}
    \item[\xmakefirstuc{\localRefine} constraints:] do not define a DSL subset on their own, but can be used to \emph{refine} an existing
        program set in a witness function for some DSL operator(s).
        For instance:
        \begin{itemize}[topsep=0pt, itemsep=1pt]
            \item \emph{Datatype constraint}: ``output$\colon \tau$''.
                Eliminates all top-level programs rooted at any type-incompatible DSL operators.
            \item \emph{Provenance constraint}: describes the desired construction method for some parts of the output.
                For example, in \ff it may take form ``substring $[i:j]$ of the output example~$o_k$ is extracted from location $\ell$ of
                the corresponding input~$\state_k$'', or ``substring $[i:j]$ of the output example $o_k$ is a date value, formatted as
                \stringliteral{YYYY-MM-DD}''.
                Allows simple domain-specific elimination of invalid subprograms.
            \item \emph{Relevance constraint}: marks inputs or parts of the input as \emph{required} or \emph{irrelevant}.
                Eliminates all programs that do not use any required parts or reference any irrelevant parts.
        \end{itemize}
    \item[\xmakefirstuc{\globalRefine} constraints:] do not define a DSL subset on their own and do not permit any efficient local
        refining logic in witness functions.
        They can only be satisfied by filtering an existing program set on the topmost level of the DSL (i.e., by projecting the set on the
        constraint).
        For instance:
        \begin{itemize}[itemsep=1pt]
            \item \emph{Negative example constraint:} ``output $\ne v$'',
            \item \emph{Negative membership constraint:} ``output $\not\ni v$''.
        \end{itemize}
\end{description}
Given a program set $\vsa$ and a new constraint $\constraint$, we filter $\vsa$ w.r.t. $\constraint$ differently depending on the category
of $\constraint$.

\begin{itemize}[itemsep=1pt, topsep=2pt]
    \item If $\constraint$ is \decomposable, it seeds a new full round of deductive synthesis, which may narrow down the set unpredictably.
        We convert the set $\vsa$ into an isomorphic DSL $\text{\textsc{VsaToDsl}}(\vsa)$, and use it as a search space for a new
        synthesis round with a spec consisting of a single constraint $\constraint$.
    \item If $\constraint$ is \localRefine, it is only relevant to select witness functions.
        Suppose these witness functions backpropagate specs for the operator $F(N_1, \dots, N_k)$.

        We first identify occurrences of $F$ in $\vsa$.
        Each node of kind $\vsajoin(\vsa_1, \dots, \vsa_k)$ in $\vsa$ has been constructed during the top-down grammar traversal in a
        previous iteration of deductive synthesis as a solution to some intermediate synthesis subproblem $\mathsf{Learn}(F(N_1, \dots,
        N_k), \spec_F)$.
        To enable incrementality, \emph{we keep references to all intermediate specs $\spec_F$ that were produced at this level in the
        previous synthesis iteration}.

        We now repeat the learning for $F$ on all retained subproblems, but with their previous specs $\spec_F$ conjoined with the new
        constraint $\constraint$.
        The witness functions for $F$ take $\constraint$ into account and produce potentially more refined specs for $N_1, \dots, N_k$.
        These new specs are \decomposable (since they were produced by witness functions), and thus initiate new rounds of incremental
        synthesis on $\text{\textsc{VsaToDsl}}(\vsa_1), \dots, \text{\textsc{VsaToDsl}}(\vsa_k)$.
        The results replace $\vsa_1, \dots, \vsa_k$ in $\vsa$ without affecting the rest of the set.
    \item If $\constraint$ is \globalRefine, it cannot be efficiently resolved by inspecting $\vsa$ or invoking witness functions.
        Thus, we have to compute the projection $\projection[\vsa][\constraint]$ at the top level.
        An efficient implementation of the projection operation computes the clustering $\clustering$ on an input $\state$ from
        the constraint $\constraint$.
        All programs in the same cluster $\vsa_k$ produce the same output $v_k$ on $\state$.
        If the constraint $\constraint$ only references the output of the desired program (as all \globalRefine constraints do), then all
        programs in $\vsa_k$ either satisfy $\constraint$ or not.
        Thus, we simply take a union of clusters where the corresponding outputs $v_k$ satisfy $\constraint$.

        We note that for many \globalRefine constraints this operation is trivial once the clustering is computed.
        For example, a negative example constraint ``output $\ne v$'' eliminates at most 1 cluster---the one where $v_k = v$, if one
        exists.
\end{itemize}

This incremental learning algorithm can be expressed as an instance of interactive synthesis formalism from \Cref{problem:interactive}.
For that, set $\Sigma_i$ to be a tuple of $\vsa_i$ (required to become the search space at the next
iteration) and all intermediate specs produced by witness functions (required to resolve \localRefine constraints).

\begin{theorem}
    If the underlying one-shot learning algorithm $\synalgorithm$ for $\dsl$ is complete, then
    \textbf{(a)} the incremental learning $\widehat{\synalgorithm}$ is complete, and
    \textbf{(b)} at each iteration $i$ the result $\vsa_i$ of incremental learning is equal to the result of
    cumulative one-shot learning $\mathsf{Learn}_{\synalgorithm}(N_i, \spec_0 \wedge \dots \wedge \spec_i)$.
    \label{th:incremental:complete}
\end{theorem}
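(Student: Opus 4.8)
The plan is to prove both parts simultaneously by induction on the iteration index $i$, with part (b) providing the invariant that drives part (a). The core claim to establish is: at each iteration $i$, the program set $\vsa_i$ produced by the incremental algorithm equals $\semantics{\mathsf{Learn}_{\synalgorithm}(N_i, \spec_0 \wedge \dots \wedge \spec_i)}$, viewed as a set of programs. Once (b) is in hand, completeness (a) follows: if some $P \in \dsl$ satisfies the task spec $\spec^{*}$, then by Axiom A it satisfies every $\spec_j$, hence the cumulative one-shot call at the converging nonterminal still returns a nonempty set (by completeness of $\synalgorithm$), so $\vsa_m \neq \emptyset$ and the process can converge; symmetrically, if no program satisfies $\spec^{*}$, completeness of $\synalgorithm$ eventually yields $\vsa_m = \emptyset$ once enough constraints have accumulated, so the process fails. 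The precise bookkeeping here needs Axiom B to handle the case where the user switches back to a previously seen nonterminal $N_i = N_j$.

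For the base case, $i = 0$: the incremental algorithm runs $\synalgorithm$ directly on $\mathsf{Learn}(N_0, \spec_0)$, so $\vsa_0 = \semantics{\mathsf{Learn}_{\synalgorithm}(N_0, \spec_0)}$ trivially. For the inductive step, assume $\vsa_{i-1} = \semantics{\mathsf{Learn}_{\synalgorithm}(N_{i-1}, \spec_0 \wedge \dots \wedge \spec_{i-1})}$ (modulo the nonterminal-switch case). I would split on the category of the new constraint $\constraint_i$ in $\spec_i$:
\begin{itemize}
    \item \emph{\xmakefirstuc{\decomposable} $\constraint_i$:} The algorithm forms $\dsl' = \text{\textsc{VsaToDsl}}(\vsa_{i-1})$ and runs $\synalgorithm$ on $\mathsf{Learn}(N_i, \constraint_i)$ over $\dsl'$. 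The key lemma is that $\text{\textsc{VsaToDsl}}$ is semantics-preserving: the set of programs of $\dsl'$ is exactly the set represented by $\vsa_{i-1}$. This follows by structural induction on the VSA grammar, matching union nodes to CFG alternatives, join nodes to operator productions, and leaf nodes to annotated terminals (the isomorphism is stated in the text). Since all witness functions for $\dsl$ remain valid for the sub-DSL $\dsl' \subset \dsl$, running $\synalgorithm$ over $\dsl'$ with spec $\constraint_i$ yields exactly the programs in $\vsa_{i-1}$ satisfying $\constraint_i$, which by the inductive hypothesis equals $\semantics{\mathsf{Learn}_{\synalgorithm}(N_i, \spec_0 \wedge \dots \wedge \spec_i)}$.
    \item \emph{\xmakefirstuc{\localRefine} $\constraint_i$:} Here the algorithm revisits the retained intermediate subproblems $\mathsf{Learn}(F(N_1,\dots,N_k), \spec_F)$ recorded in $\Sigma_{i-1}$, re-runs the witness functions of $F$ with $\spec_F \wedge \constraint_i$, and recursively relearns the affected $\vsa_j$ on $\text{\textsc{VsaToDsl}}(\vsa_j)$. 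Correctness rests on: (i) the retained specs $\spec_F$ are exactly those that a cumulative one-shot run of $\synalgorithm$ would produce at that grammar level (this is where the definition of $\Sigma_i$ as carrying all intermediate specs is essential), and (ii) re-running the witnesses with the conjoined constraint is equivalent to what $\synalgorithm$ would do on $\spec_0 \wedge \dots \wedge \spec_i$ at that node. Then the decomposable-constraint case applies recursively to the relearning of each $N_j$.
    \item \emph{\xmakefirstuc{\globalRefine} $\constraint_i$:} The algorithm computes $\projection[\vsa_{i-1}][\constraint_i]$ by clustering on the input $\state$ from $\constraint_i$ and keeping clusters whose common output satisfies $\constraint_i$. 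Since a global-refining constraint references only the program's output, every program in a given cluster uniformly satisfies or violates $\constraint_i$, so this projection yields exactly $\{P \in \vsa_{i-1} : P \models \constraint_i\}$, which by the inductive hypothesis is the desired cumulative set.
\end{itemize}

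The main obstacle I anticipate is the \localRefine case: establishing precisely that the retained intermediate specs $\spec_F$ together with the relearning procedure reconstruct exactly the cumulative one-shot result. This requires an auxiliary lemma about the structure of deductive synthesis --- essentially that a one-shot run of $\synalgorithm$ on $\spec_0 \wedge \dots \wedge \spec_i$ factors through the run on $\spec_0 \wedge \dots \wedge \spec_{i-1}$ at every grammar node where the new constraint is \localRefine, with the new constraint only tightening the witness-backpropagated specs locally and not altering the grammar traversal structure. Formalizing this ``locality'' and checking it is compatible with the clustering step in \emph{split on conditional execution} (where the relevant specs $\spec_{2i}$ depend on concrete values $v_i$) is the delicate part; the \decomposable and \globalRefine cases are comparatively routine given the $\text{\textsc{VsaToDsl}}$ isomorphism and the output-only nature of global constraints, respectively. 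A secondary subtlety is the nonterminal-switch bookkeeping for Axiom B, which I would handle by noting that when $N_i = N_j$ for $j < i$, the incremental state still carries a program set over $N_j$ that, by the induction hypothesis, is the cumulative result, and Axiom B guarantees this set is a valid (super)set to refine further.
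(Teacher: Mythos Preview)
The paper does not actually prove this theorem: its proof reads, in full, ``Omitted for lack of space. Intuitively, the theorem follows from the fact that $\vsa_i$ are monotonically non-increasing and from the axioms of interactive synthesis in \Cref{problem:interactive}.'' Your proposal is therefore not competing with a paper proof so much as supplying one, and the induction on $i$ with part (b) as the invariant is the natural way to unpack that one-line hint. The monotonicity the paper invokes is exactly what your inductive step establishes case-by-case: each refinement (via \textsc{VsaToDsl} and re-synthesis, via local witness re-invocation, or via projection) carves out the subset of $\vsa_{i-1}$ satisfying the new constraint, and completeness of $\synalgorithm$ ensures no valid program is lost.

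Two points worth tightening. First, your derivation of (a) from (b) shows $\vsa_m \ne \emptyset$ whenever some $P \models \spec^{*}$ exists, but ``eventually converges'' in the paper's definition also requires that the top-ranked program in $\vsa_m$ satisfy $\spec^{*}$; this needs the (implicit) assumption that the interactive process continues to supply refining constraints until that happens, which is really a property of the process rather than of $\widehat{\synalgorithm}$, and the paper's own definitions are somewhat loose here. Second, you correctly flag the \localRefine case as the delicate one: the auxiliary lemma you sketch (that the retained intermediate specs $\spec_F$ plus the conjoined constraint reproduce what a cumulative one-shot run would deduce at that node) is genuinely the crux, and it relies on the witness functions being deterministic and on the grammar-traversal structure above the affected $F$-nodes being unchanged by a purely local constraint. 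The paper does not justify this either, so your identification of it as the main obstacle is apt rather than a gap in your argument.
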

\begin{proof}
    \vspace{-1\baselineskip}
    Omitted for lack of space.
    Intuitively, the theorem follows from the fact that $\vsa_i$ are monotonically non-increasing and from the axioms of interactive
    synthesis in \Cref{problem:interactive}.
\end{proof}

\section{Step-based Synthesis}
\label{sec:step}

In this section we introduce our formalism for the \emph{step-based program synthesis}.
We first start with a motivational case study of \fe, then define the step-based interaction process formally, and discuss its
applicability to our DSL examples in this paper.

\subsection{Motivation}
Consider the \fe DSL $\fedsl$, presented in \Cref{fig:overview:schema}.
A program in $\fedsl$ extracts a structured dataset from a given text file.
The schema for this constructed dataset may arbitrarily combine sequences, structs, and primitive field extractions.
However, the users usually are more inclined to  provide examples for a single field and then move on to another field (as
opposed to providing examples of tuples).
They are more comfortable providing partial constraints on the individual fields of the desired output, as opposed to
burdensome complete examples of extracted objects.

This situation motivated the developers of \fe-based applications to expose constraints on individual fields as the primary UI for end
users.
They then provided these constraints as individual subproblems for sequence ($\feseqdsl$) or region ($\ffdsl$) extraction.
This required the developers to implement their own wrapper code to support the entirety of the learner-user
interaction---that is, \textbf{(a)} the schema classes for combining sequence and field programs in a compound extraction program,
and \textbf{(b)} a learning logic to guess the desired schema for a given task.
This situation had 3 drawbacks:

\begin{enumerate}
    \item Implementing (a) and (b) is cumbersome.
        Moreover, the application-specific implementations usually cannot be shared across different applications.
    \item The schema learning logic (b) is non-trivial.
        In fact, the schema format can be naturally captured using a recursive DSL in our formalism ($\fedsl$), and hence can/should be
        learned using the synthesis algorithm from the \projectName SDK.
    \item When the synthesizer treats multiple fields in the same task as independent problems, it cannot leverage insights learned from one
        field to improve its learning logic for another field.
        As a result, the entire task requires more examples to converge.
\end{enumerate}

To address these drawbacks, we define step-based synthesis over compound DSLs as a novel first-class formalism over FlashMeta.

\subsection{Problem Definition}

\begin{defn}[Compound DSL]
    A DSL $\dsl$ is called a \emph{compound DSL} if it includes any \emph{extern} nonterminals $N_1, \dots, N_m$, which resolve to output
    symbols of some \emph{sub-DSLs} $\dsl_1, \dots, \dsl_m$ respectively.
\end{defn}

\begin{defn}[Constraints on named subexpressions]
    Given a compound DSL $\dsl$, a \emph{named constraint} $\constraint^{\field\colon N}$ is specified for an extern nonterminal $N$ in
    $\dsl$.
    Its meaning is as follows:

    \begin{quote}
        \vspace{-0.5\baselineskip}
        ``There exists a subexpression rooted at $N$ in the desired compound program.
        We mark it with ID $\field$.
        This subexpression must satisfy the constraint $\constraint$.''
        \vspace{-0.5\baselineskip}
    \end{quote}
    When used in a context of iterative learner-user interaction on a larger task, all named constraints with the same ID $\field$ apply to
    the same subexpression in the desired compound program in $\dsl$.

    \emph{Named specs} $\spec^{\field\colon N}$ are defined similarly as conjunctions of named constraints on $\field$.
\end{defn}

At any point during the learner-user interaction, the step-based synthesis algorithm has accumulated the following information in its
interaction state $\Sigma_i$:
\begin{enumerate}[label=(\roman*), noitemsep, topsep=3pt]
    \item a spec $\spec$ on the compound program,
    \item a list of named subexpressions $\field_1, \dots, \field_n$, which appear in the compound program, and
    \item specs $\spec^{\field_1}, \dots, \spec^{\field_n}$ on these subexpressions.
\end{enumerate}

When the user provides a new named constraint $\constraint^{\field_i}$ on a subexpression $\field_i$, deductive synthesis must
learn a new VSA of compound programs, incorporating the new constraint in the process.
It starts the top-down synthesis from $\spec$, as before.
When the search process reaches the nonterminal $N_i$ with some deduced spec $\spec'$, the learner compares it with the new constraint
$\constraint^{\field_i}$ (since $\spec'$ is known to already be compatible with the previous named spec $\spec^{\field_i}$).
The witness functions for $N_i$ now must consider the new constraint $\constraint^{\field_i}$.

Two options exist at this point:
\textbf{(a)} we are learning the named subexpression $\field_i$, or
\textbf{(b)} we are learning an unrelated subexpression, which happens to also start from $N_i$.
Option (b) has already been considered in the previous synthesis iterations, so the learner can just reuse the corresponding VSA.
Thus, it must only detect whether option (a) is applicable.
To do that, the witness functions for $N_i$ check if the deduced spec $\spec'$ is compatible with the new
constraint~$\constraint^{\field_i}$.
If it is, they generate a refined spec $\spec''$ for subsequent learning of $\field_i$ in its sub-DSL $\dsl_i$, and the VSA for $\field_i$
is replaced with the new one.

Re-learning of $\field_i$ may impact other subexpressions of the program, which depend on $\field_i$.
In the FlashMeta formalism these dependencies are updated automatically, by the means of conditional witness functions.
If a VSA learned for a prerequisite subexpression changes, then its clustering will also change, and the new branches will produce new specs
for the dependent subexpression.

\subsection{Case Studies}

\begin{example}[\fe]
Consider the task of extracting a sequence of customer records, each of which contains a customer name and phone number from the following
text file:
\begin{alltt}\small
    Carrie Dodson
    202-555-0153
    Leonard Robledo
    945-051-0159
    \(\dots\)
\end{alltt}
\end{example}

There are three named subexpressions in the task: the customer record, the customer name, and the phone number.
Step-based synthesis allows DSL designers to build different interaction models for \fe with minimal efforts.
For instance, one can build a non-step-based model where users provide the compound spec $\spec$ (i.e., some nested records with names and phone numbers).
In this model, users have to specify the relationship of the subexpressions in the spec $\spec$.
The PowerShell cmdlet \texttt{ConvertFrom-String} adopts this model because of its command-line user interface.
In another model, users only need to provide the named specs iteratively and in a step-based manner for each of the three fields.
In particular, users learn each of these in order by increasingly giving some named field instances until the field is identified correctly.
Interactive systems such as the one by \citeauthor{flashprog}~\cite{flashprog} adopts this model.

The learner of the first model is simpler than that of the second one because the relationships among the subexpressions (and therefore the structure of the output/program) are given.
The learner expands the grammar along the path specified by the structure in $\spec$ and learns all subexpressions in $\spec$.
In contrast, at any point of time, the learner of the second model has to take into account both the current (partial) program and the new named spec to create a new (partial) program.
For instance, while learning for customer name, the learner promotes the sequence \emph{field} of customer records (which is learned in previous step) to a sequence of \emph{structs} which contains a field customer name.
Subsequently, the learning of phone number puts a new field phone number into the already constructed struct customer record.
Although the step-based, interactive model involves more work (for the DSL designer), it helps to reduce the total number of examples across all fields that the user needs to provide in an extraction task (see \Cref{sec:step-based-eval}).

\begin{example}[\ff]
Consider the task of normalizing phone numbers into the format ``(XXX) XXX-XXXX'' as follows:

\begin{center}
    \small
    \begin{tabular}{ll}
        \toprule
        \textbf{Input} & \textbf{Output} \\
        \midrule
        \texttt{485-7829} & \texttt{(133) 485-7829} \\
        \texttt{555-0175} & \texttt{(033) 555-0175} \\
        \texttt{555 0122} & \texttt{(033) 555-0122} \\
        \texttt{033 555 6694} & \texttt{(033) 555-6694}  \\
        \dots & \dots \\ \bottomrule
    \end{tabular}
\end{center}
\end{example}

In the non-step-based model, because users provide the whole compound program spec $\spec$, \ff has to keep tracks of all possible partitioning of input rows, and for each partitioning, all of its transformation programs.
Since the problem is intractable, in practice people usually limit the number of the partitions, which affects the expressiveness of the DSL.
The ranking system in this model also requires more efforts because it deals with the space of all satisfying compound programs.
For instance, from the examples it is unclear if the last row belongs to the partition that contains ``555'' (and therefore ``033'' is a constant string), or it should has its own partition (which takes ``033'' from the input).

In contrast, step-based model separates the two subproblems as two named subexpressions, and enables users to provide spec for the subproblems.
Because this model eliminates the implicit dependency between the two subproblems, it makes the problem more tractable and simplify the ranking system.

\section{Feedback-based Synthesis}
\label{sec:feedback}
In this section, we formally define our proposed learner-user interaction model that leverages proactive feedback in the form of queries to
the user.
We also present its applications to specific example DSLs of this paper (\ff and \fs), and discuss practical issues involved in picking a
question, evaluating its effectiveness, and defining a stopping criterion.

\subsection{Problem Definition}
\label{sec:feedback:problem}

Let $\dsl$ be a DSL.
Let $\Psi$ be a set of top-level \emph{constraint types} supported by the synthesizer and witness functions for $\dsl$.
For each constraint type $\constraint \in \Psi$ we associate a \emph{descriptive question} $q$ such that a
response $r$ for this question directly constitutes an instance of~$\constraint$.
We denote such a constraint as $\Psi(r)$.
Questions $q$ can be Boolean (usually in ternary logic) or multiple-choice.
We denote the set of possible responses for $q$ as $R(q)$.

\begin{example}
    An \emph{example constraint} ``output $= v$'' corresponds to a multiple-choice question ``Is the desired output on an input $\state$
    equal to $v_1, v_2, \dots, \text{ or } v_k$?''
    A response to this question constitutes an example constraint ``output $= v_i$'' for the chosen $i$.

    A \emph{datatype constraint} ``$\text{output}\colon \tau$'' corresponds to a Boolean question ``Is the desired program a
    computation of type $\tau$?
    Yes (always), no (never), or unknown (maybe).''

    Questions, like constraints, can be domain-specific.
    In \ff, a \emph{relevance constraint} states ``an input $v_i$ must/must not appear in the program.''
    It corresponds to a Boolean question ``Should the input $v_i$ be used?
    Yes (always), no (never), or unknown (maybe).''
\end{example}

\begin{figure}[t]
    \centering
    \small
    \begin{mdframed}[innerleftmargin=-8pt, innerrightmargin=3pt, innertopmargin=2pt, innerbottommargin=2pt]
        \begin{algorithmic}[1]
            \NoNumber \Procedure{Disambiguate}{Candidate programs $\vsa$, current spec $\spec$}
                \addtocounter{ALG@line}{-1} \WithNumber
                \State Analyze the ambiguities in $\vsa$ w.r.t. $\spec$.
                \Statex Let $Q$ be a set of questions that may resolve ambiguity in $\vsa$
                \State $q^{*} \gets \argmax\limits_{q \in Q} \mathsf{ds}(q, \vsa, \spec) $
                \Comment{\parbox[t]{4.0cm}{\setstretch{0.95} Compare the disambiguation scores of all questions}}
                \If{$\mathsf{ds}(q^{*}, \vsa, \spec) < \text{threshold } T$}
                    \State \textbf{break}
                \Else
                \State Present the question $q^{*}$ to the user
                \State Let $r$ be the user's response to $q$
                \State Let $\constraint$ be the response $r$ converted into a constraint
                \State \Return $\constraint$ to the learner and invoke a new round of synthesis
                \EndIf
            \EndProcedure
        \end{algorithmic}
    \end{mdframed}
    \vspace{-0.5\baselineskip}
    \caption{The \hypothesizer's proactive disambiguation algorithm.}
    \vspace{-1.3\baselineskip}
    \label{fig:feedback:algorithm}
\end{figure}

As described in \Cref{sec:overview}, we introduce a novel component called \emph{the \hypothesizer} in the learner-user interaction model.
\Cref{fig:feedback:algorithm} shows its disambiguation algorithm.
Given a VSA $\vsa$ of current candidate programs and the current iteration's spec $\spec$, the job of the \hypothesizer is to analyze~$\vsa$
and pick the best question to resolve ambiguities in~$\vsa$.
If~$\vsa$ has no ambiguities, or if the \hypothesizer is not confident in the effectiveness of potential questions, it considers the current
candidate program $P^{*} = \mathsf{Top}_h (\vsa, 1)$ correct and does not ask any questions at this iteration.

\paragraph{Disambiguation score}
To evaluate a question's effectiveness, the \hypothesizer is parameterized with a \emph{disambiguation score} function $\mathsf{ds}(q, \vsa,
\spec)$.
Higher disambiguation scores correspond to more effective questions $q$---that is, constraints generated by answering $q$ eliminate more
incorrect programs from $\vsa$.
Since the \hypothesizer cannot predict the user's response, $\mathsf{ds}(q, \vsa, \spec)$ must represent \emph{potential
effectiveness} of $q$ for any possible outcome.

Disambiguation score functions may be domain-specific or general-purpose.
In our evaluation, we found different functions to perform well for different DSLs.
In this section, we present one efficient \emph{general-purpose} disambiguation score function, which is independent of the current
iteration's spec $\spec$ but takes into account the ranking scores of alternative candidate programs in $\vsa$.

The ranking-based disambiguation score function prefers a question that promotes higher-ranked programs:
\[
    \mathsf{ds_R}(q , \vsa, \spec) \bydef \min_{r \in R(q)} \max_{P \in \vsa_r} h(P)
\]
where $\vsa_r$ is a set projection $\projection[\vsa][\Psi(r)]$, and $h$ is a ranking function provided with the DSL.
In other words, $\mathsf{ds_R}(q , \vsa, \spec)$ is higher if every response for the question $q$ leads to a higher-ranked alternative program
among the candidates that are consistent with this response.

This disambiguation score can be efficiently evaluated for many constraint types.
For instance, calculating $\mathsf{ds_R}(q , \vsa, \spec)$ for \emph{example constraints} amounts to clustering $\vsa$ and
comparing the top-ranked programs across all clusters.
Alternatively, we can quickly compute a good approximation to $\mathsf{ds_R}(q , \vsa, \spec)$ by randomly sampling $k$ programs
from the VSA and considering only their outputs.

\subsection{Case Studies}
\label{sec:feedback:casestudies}
\paragraph{\ff}

Our feedback-driven synthesis for the \ff language (\Cref{fig:overview:flashfill})
uses example constraint questions. In order to efficiently generate these questions,
we sample $2000$ programs from the VSA and cluster on those, assigning the
disambiguation score $\mathsf{ds_R}$ as described above. We chose $2000$ because empirically
it was a good balance between performance and having a high probability of including
at least one program from every large cluster.

\paragraph{\fs}


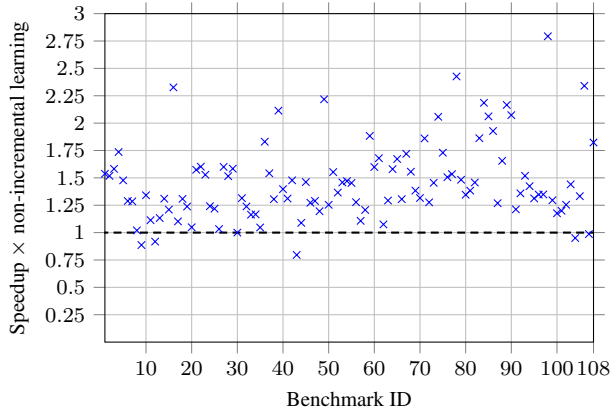
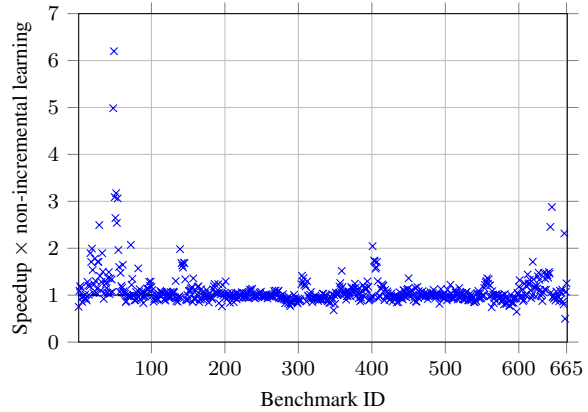
\begin{figure*}[!t]
    \centering
    \begin{subfigure}{0.48\textwidth}
        \begin{tikzpicture}
            \begin{axis}[
                    width=0.95\linewidth,
                    height=0.7\linewidth,
                    xlabel style={align=center,font=\small},
                    xlabel=Benchmark ID,
                    ylabel style={align=center,font=\small},
                    ylabel=Speedup $\times$ non-incremental learning,
                    tick align=outside,
                    grid=major,
                    xtick style={font=\small},
                    ytick style={font=\small},
                    xtick={10,20,30,40,50,60,70,80,90,100,108},
                    ytick={0.25,0.5,0.75,1.0,1.25,1.5,1.75,2.0,2.25,2.5,2.75,3.0},
                    ymin=0,
                    xmin=1,
                    ymax=3.0,
                    xmax=108,
                    legend pos = north west,
                    legend cell align=left,
                    legend style={font=\small}
                ]
                \addplot[only marks, mark=x, color=blue] table {figures/ff-incremental-speedup.dat};
                \addplot[color=black,style=densely dashed,thick] coordinates {(0,1) (108,1)};
            \end{axis}
        \end{tikzpicture}
        \caption{Speedup on the \ff DSL.}
        \label{subfigure:ff-speedup}
    \end{subfigure}
        \begin{subfigure}{0.48\textwidth}
        \begin{tikzpicture}
            \begin{axis}[
                    width=0.95\linewidth,
                    height=0.7\linewidth,
                    xlabel style={align=center,font=\small},
                    xlabel=Benchmark ID,
                    ylabel style={align=center,font=\small},
                    ylabel=Speedup $\times$ non-incremental learning,
                    tick align=outside,
                    grid=major,
                    xtick style={font=\small},
                    ytick style={font=\small},
                    xtick={100,200,300,400,500,600,665},
                    ytick={0,1,2,3,4,5,6,7},
                    ymin=0,
                    xmin=1,
                    ymax=7,
                    xmax=666,
                    legend pos = north west,
                    legend cell align=left,
                    legend style={font=\small}
                ]
                \addplot[only marks, mark=x, color=blue] table {figures/fe-incremental-speedup.dat};
                \addplot[color=black,style=densely dashed,thick] coordinates {(0,1) (108,1)};
            \end{axis}
        \end{tikzpicture}
        \caption{Speedup on the \fe DSL.}
        \label{subfigure:fe-speedup}
    \end{subfigure}
    \caption{Speedups obtained by the incremental synthesis algorithm vs. the non-incremental algorithm. Values higher above the $y=1$ line (where the runtimes are equal) are better.}
    \label{figure:incremental-learning-performance}
    \vspace{-1.5\baselineskip}
\end{figure*}

Our instantiation of the feedback-driven paradigm for the \fs language (\Cref{fig:overview:flashsplit}) intends to resolve ambiguities in
learning field-splitting programs with arbitrary delimiters.
The \fs's ranking function favors combinations of delimiters that occur regularly across all input rows and produce a uniform splitting.
The synthesis ambiguity lies in choosing a particular combination of consistently aligned delimiters constitutes the desired
program.
For example, there exists a huge number of natural ways to split the server log data in \Cref{fig:split:motivation} into fields (e.g.
separate the ``Date'' field into ``Day'', ``Month'', and ``Year'').
Examples help to resolve this ambiguity, but each data row may have up to 50 fields; thus, providing even a single complete
example of split positions in a row may be too burdensome.
We alleviate this user effort in two ways.

\subparagraph{Using constraints other than examples}
\fs supports a \emph{subset constraint} (as defined in \Cref{sec:incremental}), where the user specifies a \textrm{partial example} with
only some of the split positions on a given input.

\subparagraph{Providing feedback in the form of questions}
The \hypothesizer analyzes the program set $\vsa$ of all candidate delimiter expressions, and guides the user by asking questions about
ambiguous split positions.
We investigated two kinds of questions to elicit feedback:

\begin{description}
    \item [  Binary position questions. ]
        A binary position question $q \in Q_b$ presents a single position in the input row and asks if it is a desired splitting point.
        The answer ``Yes'' corresponds to a \emph{positive subset constraint} over desired split positions, and the answer ``No''
        corresponds to a \emph{negative membership constraint} over desired split positions.
        The \hypothesizer generates such questions when there exist positions that are unique to certain candidate delimiter expressions.
    \item [ Confirmation questions.]
        A confirmation question $q \in Q_c$ presents a set of positions to the user and asks whether all of these positions are valid
        splitting points.
        The answer ``Yes'' corresponds to a \emph{positive subset constraint} with all presented positions.
        The answer ``No'' corresponds to a $\false$ constraint, meaning that no program in $\fsdsl$ can satisfy the user's constraints.
        The \hypothesizer generates such questions when the user provides an example that can only be satisfied by one delimiter expression
        in $\vsa$.
        In this case, all of the positions determined by this delimiter expression must be part of the output, or else the system can
        declare failure early on.
        This saves the user the effort of providing all of these redundant examples individually.
\end{description}

In Section \cref{subsec:feedback:eval} we evaluate effectiveness of these question types individually and in combination.
For individual question types, since our ranking function $h$ does not distinguish among the top ranked well-aligned delimiters, we use a
uniform disambiguation score over all candidate programs for a particular question type.\footnote{$h$ can be improved by
    statistical analysis of delimiters that more commonly occur in practice.
    As we found \fs sufficiently efficient for our scenarios in evaluation, we left such analysis for future work.
}
For a combined system, we found that binary questions perform better than confirmation questions as the number of desired splits grows.
This inspires a split-based domain-specific disambiguation score:
\[
    \mathsf{ds_{FS}}(q , \vsa, \spec) \bydef
\begin{cases}
     \mathsf{MinSplits}(\vsa, \spec) - t  & \text{if } q \in Q_b\\
    t -\mathsf{ MinSplits}(\vsa, \spec)   & \text{if } q \in Q_c
\end{cases}
\]
where $\mathsf{MinSplits}(\vsa,\spec)$ is the minimum number of splits produced by any program in
$\vsa$ on any input in $\spec$, and $t$ is a predefined threshold.
We found $t=30$ to work well in our experiments.

\section{Evaluation}
\label{sec:evaluation}


\subsection{Incremental Synthesis}
We implemented the incremental synthesis algorithm, described in \Cref{sec:incremental}, in the PROSE framework.
We evaluate the incremental synthesis algorithm in the context of the \ff DSL.
For this case study, we picked all the benchmarks which required the user to provide two or more examples to learn a correct program, from
among the \ff benchmarks.
All the data reported in this subsection were obtained by repeating each experiment ten times and averaging the results after discarding outliers.

\Cref{figure:incremental-learning-performance} summarizes the results of our evaluation.
\Cref{figure:incremental-learning-performance}\subref{subfigure:ff-speedup} plots the \emph{speedup} obtained by
incremental algorithm over the non-incremental algorithm for each benchmark for the \ff DSL.
These were computed by dividing the execution time of the non-incremental algorithm by the execution time of the incremental algorithm for
each benchmark.
We observe that almost all the speedup values are greater than one, with the exceptions being extremely short-running benchmarks as mentioned
earlier.
Further, the incremental algorithm achieves a geometric mean speedup of 1.42 over the non-incremental algorithm, across the 108 benchmarks
considered for the \ff DSL.

\Cref{figure:incremental-learning-performance}\subref{subfigure:fe-speedup}
describes the result of a similar experiment with the \fe DSL, where
a total of 665 benchmarks were used for evaluation. The performance
gains in the case of the \fe benchmarks are more modest on average
than in the case of the \ff benchmarks. Our investigations revealed
that although the pruned VSAs at each iteration consist of fewer
programs, the \emph{structure} of these VSAs can be more complex. For
example, we observed VSAs where a union operation was performed on
several hundred join nodes. Pruning along each of these paths during
incremental learning sometimes results in large execution time
overheads. We note however, that (a) incremental learning provides
speedups in more than half the \fe benchmarks, sometimes over 6X, (b)
In most of the cases where incremental learning does not provide a
speedup, the execution times are within 10\% of the
non-incremental algorithm. In fact, slowdowns greater than 10\%
were observed in just 12.7\% of the benchmarks.

We conclude the evaluation of incremental learning by mentioning that
over 80\% of the \ff benchmarks required only two learning
iterations, and over 90\% of the \fe benchmarks required three
or fewer learning iterations.  Despite the relatively small number of
learning iterations, our evaluation demonstrates that
incrementality yields significant performance improvements.

\begin{figure*}
    \vspace{-1.5\baselineskip}
  \centering
    \begin{tikzpicture}
    \begin{axis}[
        width=0.85\linewidth,
        height=0.25\linewidth,
        xlabel style={align=center,font=\small},
        xlabel=Benchmark ID,
        ylabel style={align=center,font=\small},
        ylabel=Average number of\\interactions per field,
        tick align=outside,
        grid=major,
        xtick style={font=\small},
        ytick style={font=\small},
        ytick={0,1,2,3,4,5,6,7,8},
        xtick={5,10,15,20,25,30,35,40,45,50,55,60,65,70,75,80,85,90,95,100},
        ymin=0,
        xmin=0,
        ymax=8,
        xmax=100,
        legend pos = north west,
        legend cell align=left,
        legend style={font=\small}
      ]
      \addplot[color=red,line width=0.866pt] table {figures/step-based-avg-interactions.dat};
      \addplot[color=blue,line width=0.866pt,style=densely dashed] table {figures/step-based-avg-interactions-baseline.dat};
      \legend{Step-based, Non Step-based}
    \end{axis}
  \end{tikzpicture}
  \caption{The average number of interactions per field across all benchmarks. Lower is better.}
    \vspace{-1.0\baselineskip}
  \label{fig:step-avg-interactions}
\end{figure*}
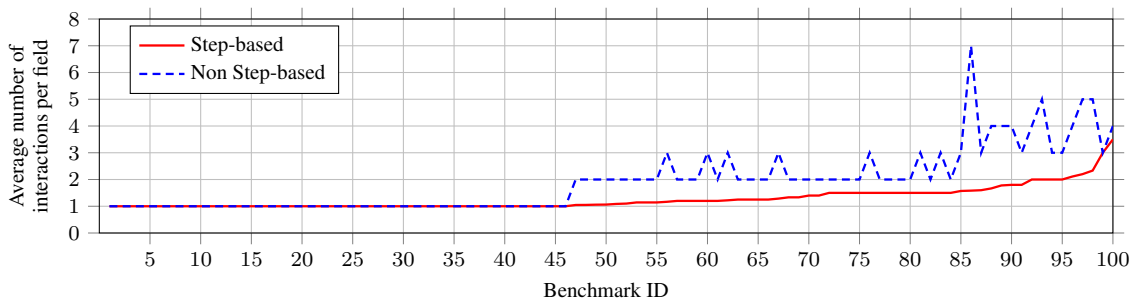

\subsection{Step-based Synthesis}
\label{sec:step-based-eval}

We use \fe to evaluate the effectiveness of step-based synthesis.
Our benchmarks consist of 100 files collected from help forums, product teams (that have exposed FlashExtract capability as a feature in their products), and their end users.
Each file corresponds to an extraction task that extracts several fields into a hierarchical output tree.
The number of extracted fields in a task ranges from 1 to 36 fields (mean 5.7, median 4).
A field contains from 1 entry (such as a customer name in an email) to a few thousand entries (such as the event timestamps in a log file).
We simulate user actions in two models: non-step-based and step-based.

\paragraph{Non-step-based \fe}
The baseline of this evaluation is a non-interactive \fe where the user has to provide examples for all the fields at once.
If extraction fails (i.e., the output is not intended), the user needs to provide new examples for each of the \emph{failing} fields.
A field fails if its output is not identical to the expected output.
The new examples of a field include all output that \fe has correctly identified in the previous interaction and the first discrepancy
between the output and the expected output of that field.
The extraction succeeds if the executing output tree is identical to the expected output tree.

\paragraph{Step-based \fe}
The user of this system extracts fields in topological order (i.e., from top-level fields to leaf fields), which is usually also the
document order.
If the current field fails, the user gives new examples until \fe produces the expected output.
The selection of new examples resembles that in the non-interactive setting, which selects the correct prefix of the output and the first
discrepancy between the output and the expected output.
Once the field extraction succeeds, the user moves to the next field.
The whole extraction succeeds if all fields are identified correctly.

\paragraph{Results}

We refer the step of locating the first output discrepancy and providing new examples for a field as an \emph{interaction}.
One of the most important goals of PBE is to reduce the number of interactions to provide better user experience.
In fact, some product teams demand that PBE systems should work with only one interaction most of the time for industrial adoption.

\Cref{fig:step-avg-interactions} shows the average number of interactions per field across all benchmarks, ordered by the number of
interactions in step-based \fe.
The evaluation shows that step-based \fe requires fewer interactions than non-step-based \fe in more than half of the benchmarks that require more than one example. 
For benchmarks that require only one interaction, step-based \fe performs similarly to non-step-based \fe because the process is entirely
non-interactive.
By dividing the extraction task into several steps, step-based \fe can ``lock'' a field if its extraction has been successful and focus on
the remaining fields.
The learning of subsequent fields therefore does not have any effects on the previously learned fields.
In contrast, non-step-based \fe has to maintain all fields as once.
When a field fails, users may have to provide examples for other fields in addition to those for the failing field.

\subsection{Feedback-based Synthesis}
\label{subsec:feedback:eval}

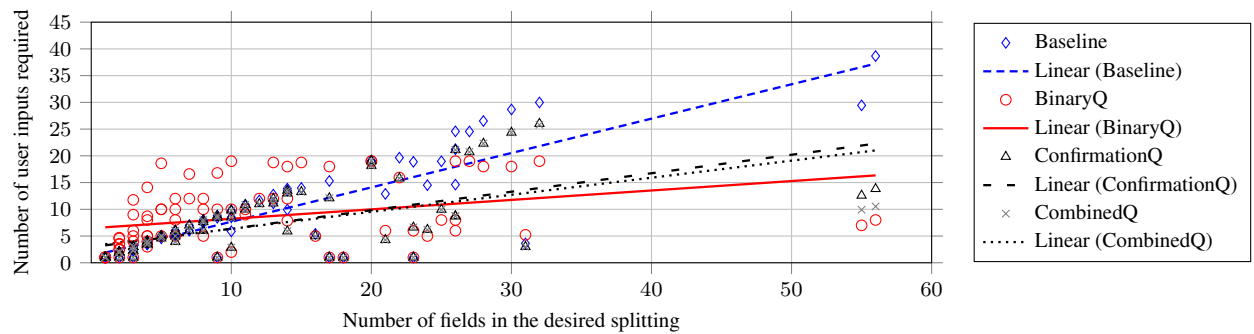
\begin{figure*}[t!]
  \centering
  \begin{tikzpicture}
    \begin{axis}[
        width=0.72\linewidth,
        height=0.27\linewidth,
        xlabel style={align=center,font=\small},
        xlabel=Number of fields in the desired splitting,
        ylabel style={align=center,font=\small},
        ylabel=Number of user inputs required,
        tick align=outside,
        grid=major,
        xtick style={font=\small},
        ytick style={font=\small},
        ytick={0,5,10,15,20,25,30,35,40,45},
        xtick={10,20,30,40,50,60},
        ymin=0,
        xmin=0,
        ymax=45,
        xmax=60,
        legend cell align=left,
        legend style={font=\small,at={(1.05,0.5)},anchor=west}
        ]
      \addplot[only marks, mark=diamond,color=blue] table {figures/split-examples-baseline.dat};
      \addplot[no marks,color=blue,line width=0.866pt,style=densely dashed] table[y={create col/linear regression={y=f(x)}}]
              {figures/split-examples-baseline.dat};
      \addplot[only marks, mark=o,color=red] table {figures/split-examples-binaryq.dat};
      \addplot[no marks,color=red,line width=0.866pt] table[y={create col/linear regression={y=f(x)}}]
               {figures/split-examples-binaryq.dat};
      \addplot[only marks, mark=triangle,color=black] table {figures/split-examples-confirmationq.dat};
      \addplot[no marks,color=black,line width=0.866pt,style=loosely dashed] table[y={create col/linear regression={y=f(x)}}]
              {figures/split-examples-confirmationq.dat};
      \addplot[only marks, mark=x,color=gray] table {figures/split-examples-combinedq.dat};
      \addplot[no marks,color=black,line width=0.866pt,style=dotted] table[y={create col/linear regression={y=f(x)}}]
              {figures/split-examples-combinedq.dat};

      \legend{Baseline, Linear (Baseline), BinaryQ, Linear (BinaryQ), ConfirmationQ, Linear (ConfirmationQ), CombinedQ, Linear (CombinedQ)}
    \end{axis}
  \end{tikzpicture}
  \caption{Comparison of effectiveness of different example provision strategies for field splitting tasks from log files. Lower is better.}
  \label{fig:feedback-split-evaluation}
    \vspace{-1.5\baselineskip}
\end{figure*}

\paragraph{\ff}
We evaluate the feedback-driven synthesis for \ff on a set of 457 text transformation tasks.
In the \emph{baseline setting}, the user provides the earliest incorrect row as the next example at each iteration.
In the \emph{feedback-driven setting}, the system instead proactively asks the user disambiguating questions on selected input rows until
the disambiguation score falls below the threshold $T$.
We set $T = 0.47$ as the mean of the score distribution over our tasks.

We evaluate \ff's feedback on two dimensions: \emph{cognitive burden} and \emph{correctness}.
Cognitive burden is defined as the number of rows the user has to \emph{read and verify} in the process.
In the baseline setting, it is the number of examples $+$ the number of correct rows before the first discrepancy that are verified
at each iteration.
In the feedback-driven setting, it is the number of questions answered.

Correctness is a combination of \emph{false positives} and \emph{false negatives}.
False positive questions occur when \ff keeps asking questions after the program is already correct.
False negative questions occur when \ff stops before it finds a correct program.

In correctness evaluation, only 2/457 tasks completed incorrectly (i.e., with false negatives).
The majority of tasks (342/457) finish with the same number of examples, and the number of false positives in the rest never exceeds 4
(specifically, 90 tasks with 1 false positive, 22 with 2, and 1 task with 4 false positives).

\Cref{tab:iterations} compares the cognitive burden of both settings.
It shows the histogram distribution of our tasks for each pair of verified row counts in baseline and feedback-driven settings.
The baseline setting often requires the user to inspect more rows (that is, the numbers \emph{below} the diagonal in \Cref{tab:iterations}
are larger than the numbers above it).

\paragraph{\fs}
We evaluate the feedback-driven synthesis for \fs on a set of 77 splitting tasks on different log files.
\Cref{fig:feedback-split-evaluation} shows the number of inputs required to complete the task against the number of split fields required by the task, for the following four example-provision strategies:

\subparagraph{Baseline}
Split position examples are provided randomly until the splitting is correct.
For each task, we average the number of examples required over 50 different random example orderings.
This models the baseline where the system does not ask any questions.
\subparagraph{BinaryQ}
One random example is provided, after which the system keeps asking binary position questions until the correct program is achieved.
This strategy is purely system-driven because the user does not provide any examples after the first.
She only answers the questions posed by the system.

\subparagraph{ConfirmationQ}
One random example is provided by the user, after which the system poses a confirmation question if one exists.
The user then provides another example, and we continue alternating between a user example and a system question until the correct splitting is achieved.
This strategy is more evenly balanced between user inputs and system feedback.

\subparagraph{CombinedQ}
The system uses a combination of binary and confirmation questions, using the disambiguation score $\mathsf{ds_{FS}}$ from \cref{sec:feedback:casestudies}.

\begin{center}
    \small
    \begin{tabular}{lc}
        \toprule
        \textbf{Strategy} & \textbf{Avg. number of inputs} \\
        \midrule
 Baseline &  8.81\\
 BinaryQ & 8.54 \\
 ConfirmationQ & 6.98 \\
 CombinedQ & 6.90\\
        \bottomrule
\end{tabular}
\end{center}

In general we see significant improvement with feedback-driven strategies over the baseline, which becomes more drastic with more fields.
Although BinaryQ performs much better than the baseline on larger splittings, it performs worse on smaller ones.
This is because for smaller splittings, the number of ambiguous programs is much larger than the number of examples requires to perform the task.
ConfirmationQ performs better on both small and large splittings.
This is because it balances user effort and system feedback: the user's examples helps reduce the search space while the system's questions
eliminate redundant examples.
Finally, with CombinedQ we see further (albeit moderate) improvement on average over the other strategies, illustrating the benefits of a system that uses different question types invoked under different circumstances.


\begin{table}[t]
    \centering
    \small
    \begin{tabular}{llllllll}
        \toprule
        & \multicolumn{7}{c}{\textbf{Feedback}} \\
        \cmidrule{2-8}
        \textbf{Baseline} & \textbf{1} & \textbf{2} & \textbf{3} & \textbf{4} & \textbf{5} & \textbf{6} & \textbf{7} \\
        \midrule
        \textbf{1} & 241 & 50 & 16 & 0 & 1 & 0 & 0 \\
        \textbf{2} & 75 & 30 & 4 & 0 & 0 & 0 & 0 \\
        \textbf{3} & 20 & 7 & 2 & 0 & 0 & 0 & 0 \\
        \textbf{4} & 0 & 5 & 3 & 0 & 0 & 0 & 0 \\
        \textbf{5} & 0 & 1 & 0 & 0 & 0 & 0 & 0 \\
        \textbf{6} & 1 & 1 & 0 & 0 & 0 & 0 & 0 \\
        \bottomrule
    \end{tabular}
    \caption{Number of rows inspected in the baseline and the feedback-driven settings for \ff evaluation.}
    \vspace{-1\baselineskip}
    \label{tab:iterations}
\end{table}

\section{Related Work}
\label{sec:related}

\paragraph{Conversational Clarification}
In \citeyear{flashprog}, \citeauthor{flashprog} studied different user interaction models that can be applied in PBE to increase the user's
confidence in the learned program and reducing the number of iterations until convergence to the correct program~\cite{flashprog}.
They compared three interaction models: providing additional examples (positive or negative), presenting a set of candidate programs using
an English paraphrasing, and \emph{conversational clarification}, a model that prompts the user with disambiguating questions on a
discrepancy between two top-ranked candidate programs.
Among them, conversational clarification vastly outperformed the other interaction models in both convergence speed and the users'
confidence.

\citeauthor{flashprog} established that interaction (in their case, disambiguating questions) in the key to building a user-friendly
mass-market PBE system.
These results inspired us to give interactive learning first-class treatment in the PBE formalism.
In this work, we explore various important dimensions of interactive program synthesis, such as performance of synthesis iterations, impact
of different kinds of clarifying questions, and a comprehensive formalism for a step-based synthesis problem in a compound DSL.

\paragraph{Oracle-Guided Inductive Synthesis}
\citeauthor{ogis} recently developed a novel formalism for inductive synthesis called \emph{oracle-guided inductive synthesis}
(OGIS)~\cite{ogis}.
It unifies several commonly used approaches such as counterexample-guided inductive synthesis (CEGIS)~\cite{sketch} and distinguishing
inputs~\cite{jha2010oracle}.
In OGIS, an inductive learning engine is parameterized with a concept class (the set of possible programs), and it learns a
concept from the class that satisfies a given partial specification by issuing queries to a given oracle.
The oracle has access to the complete specification, and is parameterized with the types of queries it is able to answer.
Queries and responses range over a finite set of types, including membership queries, witnesses, counterexamples, verification queries, and
distinguishing inputs.
They also present a theoretical analysis for the CEGIS learner variants, establishing relations between concept classes recognizable by
learners under various constraints.

The problem of interactive program synthesis, presented in this work, can be mapped to the OGIS formalism (with the end user playing the
role of an oracle).
Hence, any theoretical results established by \citeauthor{ogis} for CEGIS automatically hold for the settings of interactive program
synthesis where we only issue counterexample queries.

In addition, inspired by our study of mass-market deployment of PBE-based systems, we present further formalism for the user's interaction
with the synthesis system.
While the ``learner'' component in the OGIS formalism is limited to a pre-defined class of queries, our formalism adds a separate modal
``\hypothesizer'' component.
Its job is to analyze the current set of candidate programs and to ask the questions that best resolve the ambiguity between programs in the
set.
The \hypothesizer is domain-specific, not learner-specific, and therefore can be refactored out of the learner and reused with different
synthesis strategies.

\paragraph{Active Learning}
In machine learning, \emph{active learning} is a subfield of semi-supervised learning where the learning algorithm is permitted to issue
queries to an oracle (usually a human)~\cite{al:settles}.
As applied to, e.g., classification problems, a typical query asks for a classification label on a new data point.
The goal is to achieve maximum classification accuracy with a minimum number of queries.
Research in active learning has focused on two important problems: (a) when to issue a query, and (b) how to pick a data point for it.

This work borrows the ideas of active learning, extends them, and applies them in the domain of program synthesis.
In our setting, the issued queries do not necessarily ask for the exact output of the desired program on a given input (an equivalent of
``label'' in ML), but may also ask for weaker output properties (e.g. verify a candidate element of the output sequence).
In all cases, though, our queries are \emph{actionable}: they are convertible to constraints, which automatically trigger a new iteration of
synthesis.
We also develop a novel approach for picking an input for the query based on its \emph{ambiguity measure} w.r.t. the current set
of candidate programs.


\section{Conclusion}
\label{sec:conclusion}

The standard user interaction model in PBE is for the user to provide constraints in an iterative manner until the user is satisfied with
the synthesized program or its behavior on the various inputs.
However, this process is far from being interactive:
\enlargethispage*{\baselineskip}

\begin{itemize}[topsep=1pt,itemsep=1pt]
    \item The constraints are over the behavior of the entire program.
        In this paper, we motivated and formalized the notion of associating constraints with sub-expressions of the program.
    \item The synthesizer is re-run from scratch with the new set of constraints.
        In this paper, we discussed how to make the synthesizer incremental, leading to a snappier UI experience for the user.
    \item The refinement of the constraints is a manual process that is guided by the user without any feedback from the synthesizer.
        In this paper, we discuss various useful feedback mechanisms.
\end{itemize}

These foundational extensions help address two key challenges of performance and correctness associated with PBE.

\bibliographystyle{abbrvnat}
\bibliography{PROSE}

\end{document}